\documentclass[11pt]{article}

\usepackage{amsmath,amsfonts,amssymb,amscd,latexsym,mathrsfs,amsthm}

\setlength{\textwidth}{16cm}
\setlength{\textheight}{22cm}
\setlength{\oddsidemargin}{-0.05cm}
\setlength{\topmargin}{-0.5cm}

\newtheorem{theorem}{Theorem}
\newtheorem{lemma}[theorem]{Lemma}
\newtheorem{proposition}[theorem]{Proposition}
\newtheorem{corollary}[theorem]{Corollary}
\newtheorem{assumption}[theorem]{Conjecture}

\newtheorem{remark}[theorem]{Remark}

\newcommand{\C}{\ensuremath{\mathbb{C}}}
\newcommand{\N}{\ensuremath{\mathbb{N}}}
\newcommand{\Z}{\ensuremath{\mathbb{Z}}}
\newcommand{\Q}{\ensuremath{\mathcal{Q}}}
\newcommand{\R}{\ensuremath{\mathbb{R}}}
\renewcommand{\S}{\mathbb{S}}
\newcommand{\bR}{\overline{\R}}
\newcommand{\bRp}{\overline{\R_+}}
\def\FF{\mathcal F}
\def\H{\mathcal H}

\def\d{\mathrm{d}}
\def\SS{\mathcal S}
\newcommand{\K}{\mathcal K}
\newcommand{\B}{\mathcal B}

\newcommand{\J}{\mathcal J}
\newcommand{\E}{\mathcal E}

\newcommand{\A}{\mathcal A}
\newcommand{\ind}{\mbox{ind}}
\newcommand{\tr}{\mathrm{tr}}
\newcommand{\Tr}{\mathrm{Tr}}
\def\F21{{}_2F_1}
\def\l{\ell}
\def\U{\mathcal U}
\def\e{\varepsilon}

\def\HH{\mathscr H}

\def\bv{\boldsymbol{\varphi}}

\begin{document}

\title{On the wave operators and Levinson's theorem \\
for potential scattering in $\R^3$}

\author{Johannes Kellendonk$\,^1$ and Serge Richard$\,^2$}
\date{\small}
\maketitle

\begin{quote}
\begin{itemize}
\item[$^1$] Universit\'e de Lyon, Universit\'e Lyon I, CNRS UMR5208, Institut Camille Jordan, 43 blvd du 11 novembre 1918, 69622 Villeurbanne Cedex, France;\\
E-mail: {\tt kellendonk@math.univ-lyon1.fr}
\item[$^2$] Graduate School of Pure and Applied Sciences,
University of Tsukuba,
1-1-1 Tennodai, Tsukuba,
Ibaraki 305-8571, Japan;\\
 E-mail: {\tt richard@math.univ-lyon1.fr}
\\[\smallskipamount]
On leave from Universit\'e de Lyon, Universit\'e Lyon I, CNRS UMR5208, Institut Camille Jordan, 43 blvd du 11 novembre 1918, 69622 Villeurbanne Cedex, France
\end{itemize}
\end{quote}


\begin{abstract}
The paper is a presentation of recent
investigations on potential scattering in $\R^3$.
We advocate a new formula for the wave operators and deduce the various outcomes that follow from this formula.
A topological version of Levinson's theorem is proposed by interpreting it as an index theorem.
\end{abstract}

\section{Introduction}

In recent works
we proposed a topological approach to Levinson's theorem by
interpreting it as an index theorem. For that purpose we introduced
a $C^*$-algebraic framework, and one of the key
ingredients of our approach is the fact that the wave operators belong
to a certain $C^*$-algebra. Once such an affiliation property is
settled, the machinery of non-commutative topology leads naturally to
the index theorem.
For various scattering systems, this program was successfully applied:
potential scattering in one dimension has
been investigated in \cite{KR5}, models of one point
interaction in $\R^n$ with $n\in \{1,2,3\}$ have been solved in
\cite{KR1}, the Friedrichs model was studied in \cite{RT}, various results on the Aharonov-Bohm model have been obtained in \cite{KPR,PR} and both \cite{KR2,KR3} are review papers describing the main idea of these investigations.
Furthermore, we have been kindly informed by its authors that a closely related work on scattering theory for lattice operators in dimension $d\geq 3$ is in preparation \cite{BSB}.
Now, the present paper is a presentation of our
investigations on potential scattering in $\R^3$.

On the way of proving the affiliation property for potential scattering in $\R^3$ we were stimulated to
conjecture an even stronger result: The shape of the wave operators is
much more rigid than what had ever been expected. More precisely,
if $W_-$ denotes one of the wave operators of the scattering system
with a potential that has a sufficiently rapid decrease to $0$ at
infinity, and if $S$ denotes the corresponding scattering
operator, then
\begin{equation}\label{newformula}
W_- = 1+ \bv(A)(S -
1) + K\ ,
\end{equation}
where $\bv(A)$ is a function of the generator of dilation that is going
to be explicitly given below, and $K$ is a compact operator in
$L^2(\R^3)$. Once this formula is obtained, the affiliation problem
becomes a simple corollary.

Let us immediately say that we do not present here the complete proof of the above
formula, but shall try to motivate it and explain the various
outcomes that follow from this formula. It should become clear to the
reader that such a result is in perfect concordance with many already
known results on scattering theory in $\R^3$, and that it provides even
a natural background for many of them.
In fact, the missing part in the proof is the compactness of the remainder term $K$. Its proof is technically difficult and has almost no relation with the material which is going to be presented here. For that reason, we have decided not to obscure these outcomes and to assume from the beginning that $K$ is compact, see Conjecture \ref{hyp}. Then, one of the main consequence of formula \eqref{newformula} is a topological version of Levinson's theorem. In the second part of this introduction we describe in non-technical terms one of its previous formulation as well as our new approach.

In the Hilbert space $\H:=L^2(\R^3)$, let us consider the operators $H_0:=-\Delta$ and $H:=-\Delta + V$ for a potential $V$ which vanishes sufficiently rapidly at infinity. In such a situation, it is known that the wave operators $W_\pm$ exist and are asymptotically complete, and that the scattering operator $S$ is unitary. We denote by $\{S(\lambda)\}_{\lambda \in \R_+}$ the scattering operator in the spectral representation of $H_0$, {\it i.e.}~$S(\lambda)$ is a unitary operator in $\HH:=L^2(\S^2)$ for almost every $\lambda$.
Then, Levinson's theorem is a relation between the number $N$ of bound states
of $H$ and an expression related to the scattering part of the
system. The latter expression can be written either in terms of an
integral over the time delay, or as an evaluation of the spectral
shift function (see the review papers \cite{Bolle,Ma} and references
therein). In particular, under suitable hypotheses on $V$ \cite{BO,Martin} a common
form of Levinson's theorem is
\begin{equation}\label{LevMartin}
\hbox{$\frac{1}{2\pi}$}\int_0^\infty \big(\tr\big[iS(\lambda)^*
\hbox{$\frac{\d S}{\d \lambda}(\lambda)$}\big]-
\hbox{$\frac{c}{\sqrt{\lambda}}$}\big) \d \lambda = N +\nu,
\end{equation}
where $\tr$ is the trace on $\HH$ and $c=(4\pi)^{-1}\int_{\R^3} \d x\;\!V(x)$.
The correction term $\nu$ arises from the existence of resonance for
$H$ at energy $0$. If such a $0$-energy resonance exists, the
correction $\nu$ is equal to $1/2$, and it is $0$ otherwise. The
explanation for the presence of $\nu$ in \eqref{LevMartin} is
sometimes quite ad hoc.

We shall now show how to rewrite \eqref{LevMartin} as an index
theorem. Our approach is based on the
following construction: Let $\B(\H)$ denote the algebra of bounded operators on $\H$, and let $\E$ be a closed unital subalgebra of
$\B(\H)$ containing a closed ideal $\J$. Let us assume that (i) $W_-$
belongs to $\E$, (ii) the image of $W_-$ through the quotient map $q:
\E \to \E /\J$ is a unitary operator incorporating $S$. We shall see
that in the simplest situation, $q(W_-)$ can be identified with $S$, but that in the general
case, $q(W_-)$ incorporates besides $S$ other components which account
for the correction in \eqref{LevMartin}.

We think of $\J$ as the algebra related to the bound states system,
and of $\E/\J$ as the one corresponding to the scattering system. By
the general machinery of $K$-theory of $C^*$-algebras the map $q$
gives rise to a topological boundary map, called the index map,
$\ind:K_1(\E/\J)\to K_0(\J)$ which can be described as follows: If
$\Gamma\in \E /\J$ is a unitary representing an element $[\Gamma]_1$ in
$K_1(\E/\J)$ and having a preimage $W \in \E$ under $q$ which is
an partial isometry, then $\ind([\Gamma]_1) = [W W^*]_0-[W^*
W]_0$, the difference of the classes in $K_0(\J)$ of the range
and the support projections of $W$. In particular if $W_-$ belongs to
$\E$, the asymptotic completeness yields
\begin{equation}\label{eq-lev-ab}
\ind\big([q(W_-)]_1\big) = - [P_{\mathrm p}]_0\ ,
\end{equation}
where $P_{\mathrm p}$ is the orthogonal projection on the subspace spanned by the eigenvectors of $H$.
This result is our abstract Levinson's theorem. Concrete Levinson's theorems like
\eqref{LevMartin} arise if we apply functionals to the $K$-groups to
obtain numbers.

For a large class of scattering systems we expect that $\J= \K(\H)$,
where $\K(\H)$ is the algebra of compact operators on $\H$, and that
$\E/\J$ is isomorphic to $C\big(\S;\K(\HH)\big)^\sim$, the continuous
functions on the circle with values in $\K(\HH)$ and a unit added.
Let us already say that this assumption holds for the model considered in this paper.
In that case $K_0(\J)$ and
$K_1(\E/\J)$ are both isomorphic to $\Z$ and so the index map in
\eqref{eq-lev-ab} reduces to an homomorphism $\Z\to\Z$, and hence to a
multiple of the identity $n\,\mbox{\rm id}$ for some $n\in\Z$.
Indeed, the trace $\Tr$ on $\H$ induces a functional
$\Tr_*:K_0\big(\K(\H)\big)\to \Z$, with $\Tr_*([P]_0)=\Tr(P)$ if $P\in \K(\H)$ is a projection, and this functional is
an isomorphism. Similarly  the winding number
$w(\Gamma)$ of the determinant $t\mapsto \det\big(\Gamma(t)\big)$ induces
a functional
$w_*:K_1\big(C\big(\S;\K(\HH)\big)^\sim\big)\to\Z$, with $w_*([\Gamma]_1)=w(\Gamma)$ for any unitary $\Gamma \in C\big(\S;\K(\HH)\big)^\sim$,
which also yields to an isomorphism.
Then there exists a $n\in\Z$ such that the following index
theorem holds
\begin{equation}\label{eq-lev2}
n w\big(q(W_-)\big) = -\Tr(P_{\mathrm p})\ .
\end{equation}
The number $n$ depends on $\J\subset \E$, that is on $\E$ and its ideal $\J$ or,
as is the technical term, on the extension defined by $\E$ and $\J$.
We will find below that for the algebras constructed in Section \ref{secal} this number $n$ is equal to $1$, meaning that we are considering the Toeplitz extension and the index theorem of Krein-Gohberg.
This is our formulation of the concrete Levinson's theorem
\eqref{LevMartin}. Note that there is room for further, potentially
unknown, identities of Levinson type by choosing other functionals
in cases in which the $K$-groups are richer than those considered
above, see for example \cite{KPR}.

There is a certain issue about the functional defined by the winding
number, as the determinant of the unitary $\Gamma(t)$ is not always defined.
Nevertheless it is possible to define $w_*$ on a $K_1$-class $[\Gamma]_1$
simply by evaluating it on a representative on which the determinant $\det(\Gamma(t))$ is well defined and depends continuously on $t$. For our purposes this is
not sufficient, however, as it is not a priori clear how to construct
for a given $\Gamma$ such a representative. We will therefore have to make
recourse to a regularization of the determinant.
Let us explain this regularization in the case that
$\Gamma(t)-1$ lies in the $p$-th Schatten ideal for some integer $p$, that is, $|\Gamma(t)-1|^p$ is
traceclass.
We denote by $\{e^{i\theta_j(t)}\}_j$ the set of eigenvalues of $\Gamma(t)$.
Then the regularized Fredholm determinant $\det_p$, defined by \cite[Chap.~XI]{GGK}
$${\det}_p\big(\Gamma(t)\big)
= \prod_j e^{i\theta_j(t)}\exp\left(\sum_{k=1}^{p-1}\frac{(-1)^k}{k}
( e^{i\theta_j(t)} -1)^k\right)
$$
is finite and non-zero.
Let us furthermore suppose that $t\mapsto \Gamma(t)-1$ is continuous in the
$p$-th Schatten norm. Then the map $t\to {\det}_p\big(\Gamma(t)\big)$ is continuous and hence
the winding number of
$\S\ni t\mapsto {\det}_p\big(\Gamma(t)\big)\in\C^*$ exists.
In addition, if we suppose that $t\mapsto \Gamma(t)$ is
continuously differentiable in norm, then one can show that in contrast to the value of the determinant at fixed $t$, the winding number
will not depend on $p$.
Indeed, with this additional assumption the map $t\mapsto {\det}_{p+1}\big(\Gamma(t)\big)$ is continuously differentiable and
\begin{equation}\label{eq-wind}
\frac{\d \ln\det_{p+1}\big(\Gamma(t)\big)}{\d t} = \tr\big[\big(1-\Gamma(t)\big)^{p}\Gamma^*(t)
\Gamma'(t)\big].
\end{equation}
Furthermore, by integrating the map $t \mapsto \tr \big[\big(1-\Gamma(t)\big)^{q}\Gamma^*(t) \Gamma'(t)\big]$ around $\S$, one can show that the result is independent of $q$ as long as $q\geq p$. Note that since we have not been able to locate these results in the literature, we prove them in the Appendix.
Thus, in order to defined $w$ we may choose any $p$ for which $\Gamma(t)-1$ takes its values in
the $p$-th Schatten ideal and the corresponding map $t \mapsto \Gamma(t)-1$ is continuous in the $p$-th Schatten norm. If in addition this map is
continuously differentiable in norm, then the winding number is independent of the choice of $p$.
We will see below that the counter term $c$ in \eqref{LevMartin} can also be understood as resulting from such a regularization.

Let us finally described the content of this papier. In Section \ref{sec2}, we
recall some known facts on spherical harmonics, the Fourier transform
and the dilation operator. The scattering system is introduced in
Section \ref{sec3} in which we derive formula \eqref{newformula} from
the stationary representation of the wave operators. The term $\bv(A)$
is also explicitly computed. Section \ref{secal} is
devoted to the construction of the $C^*$-algebra pertaining for
potential scattering in $\R^3$. We give two alternative descriptions
of these algebras, each one having its own interest. In Section
\ref{secandsowhat}, we show how the formula \eqref{newformula} solves
the affiliation property and we derive some consequences of it. The
concordance of our approach with already known results on
$3$-dimensional potential scattering is put into evidence. Section
\ref{allwhatIdontknow} is dedicated to the derivation of explicit
computable formulas for the topological Levinson's theorem. The usual regularization and correction are
clearly explained. Finally, in Section \ref{secpoint} we recall the
result obtained in \cite{KR1} for the model of one point interaction
in $\R^3$ and illustrate our formalism with this example.

\section*{Acknowledgements}
S. Richard is supported by the Japan Society for the Promotion of Science.

\section{Spherical harmonics, Fourier transform and dilation operator}\label{sec2}

In this section we briefly recall the necessary background on spherical harmonics, the Fourier transform and the dilation operator.

Let $\H$ denote the Hilbert space $L^2(\R^3,\d x)$, and let $\H_r:=L^2(\R_+, r^2\;\!\d r)$ with $\R_+ = (0,\infty)$. The Hilbert space $\H$ can be decomposed with respect to spherical coordinates $(r,\omega) \in \R_+\times \S^2$: For any $\l \in \N = \{0,1,2,\ldots\}$ and $m\in \Z$ satisfying $-\l\leq m \leq \l$, let $Y_{\l m}$ denote the usual spherical harmonics. Then, by taking into account the completeness of the family $\{Y_{\l m}\}_{\l \in \N, |m|\leq \l}$ in $L^2(\S^2,\d \omega)$, one has the
canonical decomposition
\begin{equation}\label{decomposition}
\H = \bigoplus_{\l \in \N, |m|\leq \l} \H_{\l m}\ ,
\end{equation}
where $\H_{\l m}=\{f \in \H \mid f(r \omega)=g(r) Y_{\l m}(\omega) \hbox{ a.e.~for some }g \in \H_r\}$. For fixed $\l \in \N$ we  denote by $\H_\l$ the subspace of $\H$ given by $\bigoplus_{-\l \leq m \leq \l}\H_{\l m}$.

Now, let $C_c^\infty(\R^3)$ denote the set of smooth functions on $\R^3$ with compact support, and let $\FF$ be the usual Fourier transform explicitly given on any $f\in C_c^\infty(\R^3)$ and for $k \in \R^3$ by
\begin{equation*}
[\FF f](k)\equiv \hat{f}(k)= (2\pi)^{-3/2} \int_{\R^3}f(x)\;\!e^{-ix\cdot k}\;\!\d x\ .
\end{equation*}
It is known that both $\FF$ and and its adjoint $\FF^*$ leave the subspace $\H_{\l m}$ of $\H$ invariant. More precisely, for any $g \in C_c^\infty(\R_+)$ and for $(\kappa,\omega) \in \R_+\times \S^2$ one has:
\begin{equation}\label{surFl}
[\FF(gY_{\l m})](\kappa \omega) = (-i)^\l \;\!Y_{\l m}(\omega) \int_{\R_+}r^2\;\! \frac{J_{\l+1/2}(\kappa r)}{\sqrt{\kappa r}}g(r)\;\!\d r\ ,
\end{equation}
where $J_\nu$ denotes the Bessel function of the first kind and of order $\nu$. So, we naturally set $\FF_\l: C_c^\infty(\R_+) \to \H_r$ by the relation $\FF(g Y_{\l m}) = \FF_\l(g)Y_{\l m}$ (it is clear from \eqref{surFl} that this operator does not depend on $m$ ). Similarly to the Fourier transform on $\R^3$, this operator  extends to a unitary operator from $\H_r$ to $\H_r$.

Let us now consider the unitary dilation group $\{U_\tau\}_{\tau \in \R}$ defined on any $f \in \H$ and for $\tau \in \R$ by
\begin{equation*}
[U_\tau f](x) = e^{3\tau/2} f(e^\tau x)\ .
\end{equation*}
Its self-adjoint generator $A$ is formally given by $\frac{1}{2}(Q\cdot P + P\cdot Q)$, where $Q=(Q_1,Q_2,Q_3)$ stands for the position operator and $P=(P_1,P_2,P_3)\equiv(-i\partial_1,-i\partial_2,-i\partial_3)$ denotes its conjugate operator. All these operators are essentially self-adjoint on the Schwartz space $\SS(\R^3)$ on $\R^3$.

It is easily observed that the formal equality $\FF\;\!A\;\!\FF^*=-A$ holds. More precisely, for any essentially bounded function $\varphi$ on $\R$, one has $\FF\varphi(A)\FF^* = \varphi(-A)$. Furthermore, since $A$ acts only on the radial coordinate, the operator $\varphi(A)$ leaves each $\H_{\l m}$ invariant. For that reason, we can consider a slightly more complicated operator than $\varphi(A)$. For any $\l \in \N $, let $\varphi_{\l}$ be an essentially bounded function on $\R$. Assume furthermore that the family $\{\varphi_{\l}\}$ is bounded. Then the operator $\bv(A):\H \to \H$ defined on $\H_{\l}$ by $\varphi_\l (A)$ is a bounded operator. Note that we use the same notation for the generator of dilations in $\H$ and for its various restrictions in suitable invariant subspaces.

Let us finally state a result about the Mellin transform.

\begin{lemma}\label{Jensen}
Let $\varphi$ be an essentially bounded function on $\R$ such that its inverse Fourier transform is a distribution on $\R$. Then, for any $f \in C^\infty_c(\R^3\setminus \{0\})$ one has
\begin{equation*}
[\varphi(A)f](r\omega) =
\frac{1}{\sqrt{2\pi}} \int_0^\infty\check{\varphi}
\big(\ln\big(r/s)\big)\;\!(r/s)^{1/2}\;\!f(s\omega)\;\! \frac{\d s}{r}\ ,
\end{equation*}
where the r.h.s.~has to be understood in the sense of distributions.
\end{lemma}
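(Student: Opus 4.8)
The plan is to reduce the functional calculus of $A$ to the explicitly known action of the dilation group $\{U_\tau\}_{\tau\in\R}$, to perform a single change of variables, and to keep track of the distributional nature of $\check\varphi$ only at the very end. I would first treat the regular case $\check\varphi\in C_c^\infty(\R)$, where every integral converges absolutely, and then remove the regularity assumption by a density argument.

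In the regular case, since $A$ is self-adjoint with $U_\tau=e^{i\tau A}$ (the sign being fixed by the explicit form of $A$), Fourier inversion inside the functional calculus gives
\[
\varphi(A)=\frac{1}{\sqrt{2\pi}}\int_\R\check\varphi(\tau)\,U_{-\tau}\,\d\tau\ ,
\]
the integral converging in the Bochner sense because $\check\varphi\in L^1(\R)$. Inserting the explicit formula $[U_{-\tau}f](r\omega)=e^{-3\tau/2}f(e^{-\tau}r\omega)$ yields
\[
[\varphi(A)f](r\omega)=\frac{1}{\sqrt{2\pi}}\int_\R\check\varphi(\tau)\,e^{-3\tau/2}\,f(e^{-\tau}r\omega)\,\d\tau\ .
\]
I would then substitute $s=e^{-\tau}r$, so that $\tau=\ln(r/s)$ defines, for fixed $r>0$, a smooth diffeomorphism of $\R_+$ onto $\R$ with $\d\tau=-\,\d s/s$. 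The unitary normalisation $e^{-3\tau/2}=(s/r)^{3/2}$ combines with this Jacobian to produce exactly the weight and the measure $\frac{\d s}{r}$ appearing in the statement, establishing the identity whenever $\check\varphi$ is smooth and compactly supported.

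It then remains to remove the regularity assumption on $\check\varphi$, and here the hypothesis $f\in C_c^\infty(\R^3\setminus\{0\})$ is decisive. For each fixed $\omega\in\S^2$ the function $s\mapsto f(s\omega)$ lies in $C_c^\infty(\R_+)$, with support in some $[a,b]\subset(0,\infty)$, so that $s\mapsto\ln(r/s)$ is a diffeomorphism from a neighbourhood of this support onto a compact interval. Consequently the pullback of the distribution $\check\varphi$ under $s\mapsto\ln(r/s)$ is a well-defined distribution on $\R_+$, and the right-hand side acquires meaning as its pairing with the test function $(r/s)^{1/2}f(s\omega)/r$. I would obtain the general case either by approximating $\check\varphi$ in the relevant distributional topology by functions in $C_c^\infty(\R)$ and invoking continuity of both sides, or, equivalently, by testing against a second $h\in C_c^\infty(\R^3\setminus\{0\})$ and verifying the scalar identity $\langle\varphi(A)f,h\rangle=\tfrac{1}{\sqrt{2\pi}}\langle\check\varphi,\;\tau\mapsto\langle U_{-\tau}f,h\rangle\rangle$, the matrix coefficient $\tau\mapsto\langle U_{-\tau}f,h\rangle$ being smooth and rapidly decreasing for such $f,h$.

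The main obstacle is precisely this last, distributional, step: one must justify the group-integral representation of $\varphi(A)$ weakly, and the change of variables as an identity between pullbacks of a mere distribution, rather than as a manipulation of convergent integrals. The smoothness and the separation from the origin built into $C_c^\infty(\R^3\setminus\{0\})$ are exactly what guarantee that all the test functions arising are legitimate and that the two readings of the right-hand side coincide.
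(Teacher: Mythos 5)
Your route is genuinely different from the paper's: the paper does not prove this lemma at all, but simply invokes the general formulas of Jensen \cite[p.~439]{Jen} for $n=3$, together with a caveat that the sign convention for $A$ there is not followed. What you propose is in substance the computation behind Jensen's formula, made self-contained: represent $\varphi(A)=\frac{1}{\sqrt{2\pi}}\int_\R\check\varphi(\tau)\,U_{-\tau}\,\d\tau$ (Bochner integral for $\check\varphi\in C_c^\infty$, weak pairing in general), insert the explicit dilation action, and change variables $s=e^{-\tau}r$. The two-step structure is sound, and your second, duality-based option for the distributional step is the cleaner one: for $f,h\in C_c^\infty(\R^3\setminus\{0\})$ the matrix coefficient $\tau\mapsto\langle U_{-\tau}f,h\rangle$ is in fact \emph{compactly supported} in $\tau$ (the dilated support of $f$ eventually misses the support of $h$ in both directions), so pairing with any distribution $\check\varphi$ is legitimate and the identity $\langle\varphi(A)f,h\rangle=\frac{1}{\sqrt{2\pi}}\langle\check\varphi,\langle U_{-\cdot}f,h\rangle\rangle$ is Parseval for the (smooth, compactly supported) density of the spectral measure $\mu_{f,h}$ of $A$. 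Your proof buys a convention-independent verification, which is precisely the delicate point the paper's one-line citation hides.

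There is, however, one concrete flaw in your write-up: the substitution does \emph{not} produce ``exactly the weight appearing in the statement.'' With $s=e^{-\tau}r$ one has $e^{-3\tau/2}\,\d\tau=(s/r)^{3/2}\,\frac{\d s}{s}=(s/r)^{1/2}\,\frac{\d s}{r}$, so your computation yields the weight $(s/r)^{1/2}$, whereas the lemma as printed has $(r/s)^{1/2}$; no choice of sign convention for $A$ or for $\check{\phantom{\varphi}}$ can flip this weight, since it comes from the unitary normalisation and the Jacobian alone (a sign change only flips the argument $\ln(r/s)\leftrightarrow\ln(s/r)$). In fact your version is the correct one: the printed kernel, read against the measure $s^2\,\d s$, is $r^{-1/2}s^{-5/2}\,\check\varphi(\ln(r/s))$, which is not symmetric in $(r,s)$ and would contradict self-adjointness of $\varphi(A)$ for real even $\varphi$, while your kernel $(rs)^{-3/2}\check\varphi(\ln(r/s))$ is; moreover the paper's own use of the lemma in the proof of Proposition \ref{tralala} --- compare the last line of \eqref{horrible}, which carries the factor $(s/r)^{1/2}\frac{\d s}{r}$, with \eqref{formuleJen} --- is only consistent with the weight $(s/r)^{1/2}$. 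So the statement contains a typo, your chain of equalities is right, but by asserting exact agreement instead of flagging the mismatch you have either made a compensating algebra slip or silently corrected the statement; the proof should say explicitly that the correct right-hand side carries $(s/r)^{1/2}$.
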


\begin{proof}
The proof is a simple application for $n=3$ of the general formulas developed in \cite[p.~439]{Jen}.
Let us however mention that the convention of this reference on the minus sign for the operator $A$ in its spectral representation
has not been followed.
\end{proof}

In particular, if $f\in \H_\l$ and $f(r\omega)=g(r)Y_{\l m}(W)$ for some $g \in C^\infty_c(\R_+)$,
then $\varphi(A)f = [\varphi(A)g]Y_{\l m}$ with
\begin{equation}\label{formuleJen}
[\varphi(A)g](r) =
\frac{1}{\sqrt{2\pi}} \int_0^\infty\check{\varphi}
\big(\ln\big(r/s)\big)\;\!(r/s)^{1/2}\;\!g(s)\;\! \frac{\d s}{r}\ ,
\end{equation}
where the r.h.s.~has again to be understood in the sense of distributions.

\section{New formulas for the wave operators}\label{sec3}

In this section we introduce the precise framework of our investigations and deduce a new formula for the wave operators. All preliminary results can be found in the paper \cite{I} or in the books \cite{AJS,RS}.

Let us denote by $H_0$ the usual Laplace operator $-\Delta$ on $\R^3$, with domain $\H^2$, the Sobolev space of order $2$ on $\R^3$. For the perturbation, we assume that $V$ is a real function on $\R^3$ satisfying for all $x \in \R^3$
\begin{equation}\label{condV}
|V(x)| \leq c\;\! \langle x \rangle^{-\beta}
\end{equation}
for some $c>0$ and $\beta >3$, where $\langle x \rangle = (1+x^2)^{1/2}$. In that situation, the operator $H=H_0+V$ is self-adjoint with domain $\H^2$. The spectrum of $H$  consists of an absolutely continuous part equal to $[0,\infty)$ and of a finite number of eigenvalues which are all located in $(-\infty,0]$.
Furthermore, the wave operators
\begin{equation*}
W_\pm := s-\lim_{t \to \pm \infty} e^{iHt}\;\! e^{-iH_0 t}
\end{equation*}
exist and are asymptotically complete. Thus, the scattering operator $S=W_+^*\;\!W_-$ is unitary and the isometries  $W_\pm$ have support and range projections
\begin{equation}\label{eq-om}
W_\pm^*W_\pm = 1,\qquad W_\pm W_\pm^*= 1-{P_{\mathrm p}} \ ,
\end{equation}
where $P_{\mathrm p}$ is the projection on the subspace of $\H$ spanned by the eigenvectors of $H$.

Now, let $\U : \H \to \int^{\oplus}_{\R_+}\HH \d \lambda$, with $\HH:=L^2(\S^2)$, be the unitary transformation that diagonalizes $H_0$, {\it i.e.}~that satisfies $[\U H_0 f](\lambda, \omega) = \lambda [\U f](\lambda,\omega)$, with $f$ in the domain of $H_0$, $\lambda \in \R_+$ and $\omega \in \S^2$. Since the operator $S$ commutes with $H_0$, there exists a family $\{S(\lambda)\}_{\lambda \in \R_+}$ of unitary operators in $\HH$ satisfying  $\U S\U^* = \{ S(\lambda)\}_{\lambda \in \R_+}$ for almost every $\lambda \in \R_+$. Furthermore, the operator $S(\lambda)-1$ is Hilbert-Schmidt in $\HH$, and the family is continuous in $\lambda\in \R_+$ in the Hilbert-Schmidt norm.

We shall now motivate the new formula for the wave operators and state the precise conjecture. Since our analysis is based on their representations in terms of the generalized eigenfunctions $\Psi_\pm$, we first recall their constructions. A full derivation can be found in \cite[Chap.~10]{AJS} or in \cite{I}. For simplicity, we shall restrict ourselves to the study of $W_-$ (and thus $\Psi_-$) and deduce the form of $W_+$ as a corollary. Because of our hypotheses on $V$, the Lipmann-Schwinger equation
\begin{equation}\label{LS}
\Psi_-(x,k)=  e^{ik\cdot x}-\frac{1}{4\pi}\int_{\R^3} \frac{e^{i|k|\;\!|x-y|}}{|x-y|}\;\!V(y)\;\! \Psi_-(y,k)\;\!\d y
\end{equation}
has for all $k \in \R^3$ with $k^2 \neq 0$ a unique solution with the second term in the r.h.s. of \eqref{LS} in $C_0(\R^3)$ for the $x$-variable. Furthermore, the following asymptotic development holds:
\begin{equation}\label{asympt}
\Psi_-(x,k) =  e^{ik\cdot x} - i (2\pi)^{3/2}\frac{1}{\sqrt{2\pi}}\;\!
\frac{e^{ i |k|\;\!|x|}}{|k|\;\!|x|}\;\!\big(S (k^2,\omega_x,\omega_k)-1\big) + \rho(x,k),
\end{equation}
where $\omega_x = \frac{x}{|x|}$, $\omega_k = \frac{k}{|k|}$ and $S(k^2,\omega_x,\omega_k)-1$ is the kernel of the Hilbert Schmidt operator $S(k^2)-1$.
For fixed $k$, the term $\rho(x,k)$ is $o(|x|^{-1})$ as $|x|\to\infty$.

Now, it is known that the wave operator can also be expressed in terms of the generalized eigenfunctions. On any $f \in \H$ the relation reads
\begin{equation}\label{waveop}
[W_- f](x) =(2\pi)^{-3/2} \int_{\R^3}\Psi_-(x,k)\;\!\hat{f}(k)\;\!\d k
\end{equation}
where the integral has to be understood as a limit in the mean \cite[Thm.~7]{I}.
By comparing \eqref{asympt} with \eqref{waveop} one is naturally led to define for all $f \in \SS(\R^3)$ and $x \neq 0$ the integral operator
\begin{equation}\label{defdeT}
[Tf](x)=-i \frac{1}{\sqrt{2\pi}}\int_{\R_+}
\frac{e^{ i \kappa\;\!|x|}}{\kappa\;\!|x|}\;\![\FF(f)](\kappa \;\!\omega_x)\;\!\kappa^2 \d \kappa\ .
\end{equation}
An easy computation shows that this operator is invariant under the action of the dilation group.
Moreover, this operator is reduced by the decomposition \eqref{decomposition}.
The following proposition contains the precise statement
of these heuristic considerations. For that purpose, let us define for each $\l \in \N$ the operator $T_\l$ acting on any $g \in C_c^\infty(\R_+)$ as
\begin{equation*}
[T_\l \;\!g](r)=-i \frac{1}{\sqrt{2\pi}}\int_{\R_+} \kappa^2 \;\!
\frac{e^{ i\kappa\;\!r}}{\kappa\;\!r}\;\! [\FF_\l \;\!g ](\kappa)\;\!\d \kappa\ .
\end{equation*}

\begin{proposition}\label{tralala}
The operator $T_\l$ extends continuously to the bounded operator $\varphi_\l(A)$ in $\H_r$ with $\varphi_\l\in C\big([-\infty,\infty],\C\big)$ given explicitly for every $x\in \R$ by
$$
\varphi_\l(x)=
\frac{1}{2}e^{-i\pi \l /2}
\frac{\Gamma\big(\frac{1}{2}(\l+3/2+ix)\big)}{\Gamma\big(\frac{1}{2}(\l+3/2-ix)\big)}  \frac{\Gamma\big(\frac{1}{2}(3/2-ix)\big)}{\Gamma\big(\frac{1}{2}(3/2+ix)\big)}  \big(1 + \tanh(\pi x)-i\cosh(\pi x)^{-1}\big)\ .
$$
and satisfying $\varphi_\l(-\infty)=0$ and $\varphi_\l(\infty)=1$.
Furthermore, the operator $T$ defined in \eqref{defdeT} extends continuously to the operator $\bv (A)\in \B(\H)$ acting as $\varphi_\l(A)$ on $\H_\l$.
\end{proposition}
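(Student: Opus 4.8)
The plan is to compute the action of $T_\l$ explicitly in the Mellin representation, since by the general considerations the operator $T$ commutes with dilations and hence $T_\l$ must be a function of the generator $A$ on $\H_r$. The key point is that a dilation-invariant operator on $\H_r = L^2(\R_+, r^2\,\d r)$ is diagonalized by the Mellin transform, and its "symbol" is precisely the multiplier $\varphi_\l$ we wish to identify. So the strategy reduces to: (i) write $T_\l g$ as a composition of $\FF_\l$ (given by the Bessel-kernel formula \eqref{surFl}) with the integral kernel $-i(\sqrt{2\pi})^{-1} e^{i\kappa r}/(\kappa r)$; (ii) express each piece in terms of its Mellin symbol; (iii) multiply the symbols to read off $\varphi_\l(x)$.

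First I would recall that $\FF_\l$ is the Hankel-type transform with kernel $r^2\, J_{\l+1/2}(\kappa r)/\sqrt{\kappa r}$, and that both $\FF_\l$ and the convolution-against-$e^{i\kappa r}/r$ operation are homogeneous of definite degree, hence dilation-covariant. The cleanest route is to use the Mellin transform $g \mapsto \tilde g(x) = (2\pi)^{-1/2}\int_0^\infty r^{-ix} g(r)\,r^{1/2}\,\frac{\d r}{r}$ adapted to the measure $r^2\,\d r$, under which $A$ becomes multiplication by $x$. Applying the standard Mellin-transform table for the Bessel function $J_{\l+1/2}$ — essentially the formula $\int_0^\infty r^{s-1} J_\nu(r)\,\d r = 2^{s-1}\Gamma\big(\tfrac{1}{2}(\nu+s)\big)/\Gamma\big(\tfrac{1}{2}(\nu+2-s)\big)$ — gives the Mellin symbol of $\FF_\l$ as a ratio of Gamma functions involving $\l+3/2 \pm ix$, together with a phase $(-i)^\l = e^{-i\pi\l/2}$ coming from the prefactor in \eqref{surFl}. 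The remaining factor $\int_{\R_+} \kappa\, e^{i\kappa r}\,[\cdot]\,\d\kappa/r$ is itself a dilation-invariant operator whose Mellin symbol is a Gamma-function ratio in $3/2 \pm ix$; crucially, evaluating this factor requires handling the oscillatory kernel $e^{i\kappa r}$ as a boundary value, which will produce the factor $\big(1+\tanh(\pi x)-i\cosh(\pi x)^{-1}\big)$ through the Mellin transform of $e^{i\kappa}$ along the positive half-line.

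The main obstacle, and the part demanding genuine care, is the rigorous treatment of these Mellin transforms: the integrals defining the symbols of both $\FF_\l$ and the $e^{i\kappa r}$-kernel are only conditionally convergent (or convergent only as distributions / in a strip of the complex $s$-plane), so I expect to invoke Lemma \ref{Jensen} and \eqref{formuleJen} to justify the computations in the sense of distributions, then argue by density from $g \in C_c^\infty(\R_+)$ to all of $\H_r$. Once the product of the two symbols is assembled and simplified — using $\Gamma$-function reflection and duplication identities to collapse the Bessel and oscillatory contributions into the stated closed form — I would obtain $\varphi_\l(x)$ as written. Finally, to establish continuity of $\varphi_\l$ on $[-\infty,\infty]$ and the boundary values $\varphi_\l(-\infty)=0$, $\varphi_\l(+\infty)=1$, I would analyze the asymptotics of the Gamma-ratios (which have modulus $1$) together with the factor $1+\tanh(\pi x)-i\cosh(\pi x)^{-1}$, noting that this last factor tends to $0$ as $x\to-\infty$ and to $2$ as $x\to+\infty$ while the half gives the normalization; the boundedness and continuity of the family $\{\varphi_\l\}$ then upgrades $T$ to the bounded operator $\bv(A)\in\B(\H)$ acting as $\varphi_\l(A)$ on each $\H_\l$, via the decomposition \eqref{decomposition}.
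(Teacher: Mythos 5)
Your proposal is correct and follows essentially the same route as the paper: the authors likewise identify $T_\l$ as a dilation-invariant operator via Lemma \ref{Jensen} and \eqref{formuleJen}, compute $\varphi_\l$ distributionally as a product of Mellin transforms of $J_{\l+1/2}$ and of the oscillatory kernel (written there via the Hankel function $H^{(1)}_{1/2}$, which is your $e^{i\kappa}$ boundary-value factor), simplify with the reflection formula to obtain the $\big(1+\tanh(\pi x)-i\cosh(\pi x)^{-1}\big)$ term, and get the limits at $\pm\infty$ from Gamma-function asymptotics. The only cosmetic difference is that you multiply the two Mellin symbols directly while the paper composes the kernels first and then factorizes the resulting double integral.
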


\begin{proof}
For any $g \in C^\infty_c(\R_+)$ one has
\begin{eqnarray}\label{horrible}
\nonumber [T_\l\;\! g](r) & = & -i \frac{1}{\sqrt{2\pi}} \int_{\R_+} \kappa^2\;\!\frac{e^{i \kappa\;\!r}}{\kappa\;\!r}\;\!
\Big[(-i)^\l\;\!\int_{\R_+}s^2 \frac{J_{\l+1/2}(\kappa s)}{\sqrt{\kappa s}}g(s)\;\!\d s\Big]\;\!\d \kappa \\
\nonumber & = &\frac{1}{\sqrt{2\pi}}\int_{\R_+}
\Big[-i(-i)^\l\;\!\frac{r}{s}
\int_{\R_+}\kappa \;\!\frac{e^{i\kappa r/s}}{\sqrt{\kappa r/s}}\;\!J_{\l+1/2}(\kappa)\;\! \d \kappa
\Big]\big(\frac{s}{r}\big)^{1/2}\;\!g(s)\;\! \frac{\d s}{r} \\
& = & \frac{1}{\sqrt{2\pi}}\int_{\R_+}
\Big[(-i)^\l\;\!\sqrt{\frac{\pi}{2}}\;\!\frac{r}{s}
\int_{\R_+}\kappa \;\!H^{(1)}_{1/2}\big(\frac{r}{s}\kappa\big)\;\!J_{\l+1/2}(\kappa)\;\! \d \kappa
\Big]\big(\frac{s}{r}\big)^{1/2}\;\!g(s)\;\! \frac{\d s}{r}\ ,
\end{eqnarray}
where $H^{(1)}_{1/2}$ is the Hankel function of the first kind and of order $1/2$. Let us stress that the second and the third equalities have to be understood in the sense of distributions on $\R_+$, {\it cf.}~\cite{KR4}.

Now, by comparing \eqref{horrible} with \eqref{formuleJen} one observes that the operator $T_\l$ is equal on  dense set in $\H_r$
to an operator $\varphi_\l(A)$ for a function $\varphi_\l$ whose inverse Fourier transform is the distribution which satisfies for $y \in \R$:
\begin{equation*}
\check{\varphi}_\l(y)=\frac{1}{2}\;\!\sqrt{2\pi}\;\!e^{-i\pi\l/2}\;\!\!e^y
\int_{\R_+}\kappa \;\!H^{(1)}_{1/2}\big(e^y\kappa\big)\;\!J_{\l+1/2}(\kappa)\;\! \d \kappa\ .
\end{equation*}
The Fourier transform of this distribution can be computed. Explicitly, one has in the sense of distributions :
\begin{eqnarray*}
\varphi_\l(x) &=&\frac{1}{2}\;\!e^{-i\pi\l/2} \int_\R \Big[
e^{-ixy}\;\!e^y \int_{\R_+}\kappa  \;\!H^{(1)}_{1/2}\big(e^y\kappa\big) \;\!J_{\l+1/2}(\kappa)\;\! \d \kappa \Big] \d y\\
&=& \frac{1}{2}\;\!e^{-i\pi\l/2} \int_{\R_+} \Big[\kappa \;\!J_{\l+1/2}(\kappa)\int_{\R_+}\frac{\d s}{s} s^{-ix+1}\;\! \kappa^{ix-1}\;\!H^{(1)}_{1/2}(s)\Big]\d \kappa \\
&=& \frac{1}{2}\;\!e^{-i\pi\l/2} \int_{\R_+} \kappa^{(1+ix)-1}\;\!J_{\l+1/2}(\kappa)\;\!\d \kappa
\int_{\R_+} s^{(1-ix)-1}\;\!H^{(1)}_{1/2}(s)\;\!\d s \\
&=& \frac{1}{2\pi}\;\!e^{-i\pi(\l+1/2)/2} \;\!e^{\pi x/2}\;\!
\frac{\Gamma\big(\frac{1}{2}(\l+3/2+ix)\big)}{\Gamma\big(\frac{1}{2}(\l+3/2-ix)\big)}  \;\!\Gamma\big((3/2-ix)/2\big)\;\!\Gamma\big((1/2-ix)/2\big)\ .
\end{eqnarray*}
The last equality is obtained by taking into account the relation between the Hankel function $H^{(1)}_\nu$ and the Bessel function $K_\nu$ of the second kind as well as the Mellin transform of the functions $J_\nu$ and the function $K_\nu$ as presented in \cite[Eq.~10.1 \& 11.1]{O}.
Then, by taking into account the equality
$\Gamma(z)\;\!\Gamma(1-z) = \frac{\pi}{\sin(\pi z)}$
for $z=\frac{1}{4}-i\frac{x}{2}$ one obtains that
\begin{eqnarray*}
&&\frac{1}{2\pi}\;\!e^{-i\pi/4}\;\!e^{\pi x/2}
\;\!\Gamma\big((3/2-ix)/2\big)\;\!\Gamma\big((1/2-ix)/2\big)\\
&=& \frac{\Gamma\big(\frac{1}{2}(3/2-ix)\big)}{\Gamma\big(\frac{1}{2}(3/2+ix)\big)}  \frac{e^{i\pi/4}\;\!e^{\pi x/2}}{e^{i\pi/4}e^{\pi x/2} - e^{-i\pi /4} e^{-\pi x/2}} \\
&=& \frac{1}{2} \frac{\Gamma\big(\frac{1}{2}(3/2-ix)\big)}{\Gamma\big(\frac{1}{2}(3/2+ix)\big)}  \big(1 + \tanh(\pi x)-i\cosh(\pi x)^{-1}\big)
\end{eqnarray*}
which leads directly to the desired result.
The asymptotic values of $\varphi_\l$ can easily be obtained by using the asymptotic development of the function $\Gamma$ as presented in \cite[Eq.~6.1.39]{AS}.

For the final statement, it is sufficient to observe that the operator $\varphi_\l(A)$ is clearly well defined in $\H_{\l m}$, but since it does not depend on $m$ this operator is also well defined in $\H_\l$.
Furthermore, since the norm of the operator $\varphi_\l(A)$ is equal to $1$ the collection $\{\varphi_\l(A)\}$ defines a bounded operator in $\H$ equal to the operator $T$ on a dense set.
\end{proof}

\begin{corollary}\label{lemsur0}
One has $\varphi_0(x)= \frac{1}{2}\big(1 + \tanh(\pi x)-i\cosh(\pi x)^{-1}\big)$.
\end{corollary}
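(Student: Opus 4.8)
The plan is simply to specialize the explicit formula for $\varphi_\ell$ obtained in Proposition \ref{tralala} to the value $\ell = 0$. First I would observe that the scalar prefactor $e^{-i\pi\ell/2}$ collapses to $e^0 = 1$ when $\ell = 0$, so this phase disappears entirely from the expression.

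The decisive (and essentially only) observation is that the two quotients of Gamma functions in the general formula become mutually reciprocal at $\ell = 0$. Indeed, setting $\ell = 0$ in the first quotient $\frac{\Gamma(\frac{1}{2}(\ell+3/2+ix))}{\Gamma(\frac{1}{2}(\ell+3/2-ix))}$ yields $\frac{\Gamma(\frac{1}{2}(3/2+ix))}{\Gamma(\frac{1}{2}(3/2-ix))}$, which is precisely the inverse of the second quotient $\frac{\Gamma(\frac{1}{2}(3/2-ix))}{\Gamma(\frac{1}{2}(3/2+ix))}$ already present in the formula. Their product is therefore identically $1$ for every $x \in \R$, and both Gamma factors cancel against one another.

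After these two simplifications, what survives is exactly $\frac{1}{2}\big(1 + \tanh(\pi x) - i\cosh(\pi x)^{-1}\big)$, which is the claimed identity. I do not anticipate any genuine obstacle here: the corollary is an immediate algebraic consequence of the proposition, requiring only the substitution $\ell = 0$, the evaluation $e^{0} = 1$, and the cancellation of reciprocal Gamma quotients. No analytic input beyond Proposition \ref{tralala} is needed, and in particular the $\ell$-dependent Gamma ratio, which for general $\ell$ encodes the centrifugal shift, contributes nothing in the $s$-wave sector.
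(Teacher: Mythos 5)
Your proposal is correct and coincides with the paper's (implicit) argument: the corollary is stated without proof precisely because it follows from Proposition \ref{tralala} by setting $\ell=0$, where the phase $e^{-i\pi\ell/2}$ becomes $1$ and the two Gamma quotients become mutually reciprocal and cancel. Nothing further is needed.
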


By collecting these results, one can finally state our main conjecture and its consequence :

\begin{assumption}\label{hyp}
The kernel $\rho$ in \eqref{asympt} defines a compact operator $K$ on $\H$ by the formula $[Kf](x)=(2\pi)^{-3/2}\int_{\R^3}\rho(x,k)\;\!\hat{f}(k)\;\!\d k$ for any  $f \in  \H$.
\end{assumption}

\begin{theorem}\label{main}
Let $V$ satisfy condition \eqref{condV} and assume that Conjecture \ref{hyp} is verified. Then the wave operator $W_-$ for the scattering system $(-\Delta + V, -\Delta)$ satisfies the equality
\begin{equation}\label{W-}
W_- =1+ \bv(A)(S -1) + K
\end{equation}
with $\bv(A)$ defined in Proposition \ref{tralala}.
\end{theorem}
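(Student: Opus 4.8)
The plan is to obtain \eqref{W-} by inserting the exact decomposition \eqref{asympt} of the generalised eigenfunctions $\Psi_-$ into the stationary formula \eqref{waveop} for $W_-$, and by identifying the three resulting contributions with $1$, $\bv(A)(S-1)$ and $K$ respectively. Since $W_-$ is bounded, $\bv(A)$ is bounded by Proposition \ref{tralala}, $S$ is unitary, and $K$ is bounded once Conjecture \ref{hyp} grants its compactness, it suffices to establish the identity on a dense subspace, for instance on $\SS(\R^3)$, and then to extend it by continuity to all of $\H$.

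First I would treat the free term: substituting $e^{ik\cdot x}$ for $\Psi_-(x,k)$ in \eqref{waveop} gives $(2\pi)^{-3/2}\int_{\R^3} e^{ik\cdot x}\hat{f}(k)\,\d k = [\FF^*\hat f](x) = f(x)$ by Fourier inversion, which is the contribution $1$. The remainder term $\rho(x,k)$ produces, by the very definition recorded in Conjecture \ref{hyp}, the operator $K$, which is compact by assumption.

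The heart of the argument is the middle term. Inserting the scattering part of \eqref{asympt} into \eqref{waveop} and cancelling the factors $(2\pi)^{\pm 3/2}$ yields
$$
-i\,\frac{1}{\sqrt{2\pi}}\int_{\R^3}\frac{e^{i|k||x|}}{|k||x|}\big(S(k^2,\omega_x,\omega_k)-1\big)\hat{f}(k)\,\d k .
$$
I claim this equals $[T(S-1)f](x)$ with $T$ as in \eqref{defdeT}. To see this, recall that $S$ commutes with $H_0$, so that in the Fourier representation it acts fibrewise on each momentum sphere of radius $\kappa$ by the angular operator $S(\kappa^2)$ on $\HH=L^2(\S^2)$; the scalar Jacobian relating $\FF$ to the spectral transformation $\U$ depends only on $|k|$ and hence does not interfere with this angular action. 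Consequently $[\FF((S-1)f)](\kappa\omega_x)=\int_{\S^2}\big(S(\kappa^2,\omega_x,\omega')-1\big)\hat{f}(\kappa\omega')\,\d\omega'$. Feeding this into \eqref{defdeT} and rewriting $\int_{\R_+}\kappa^2\,\d\kappa\int_{\S^2}\d\omega'$ as $\int_{\R^3}\d k$ via $k=\kappa\omega'$ recovers exactly the displayed middle term. Finally, Proposition \ref{tralala} identifies the continuous extension of $T$ with $\bv(A)$, so this contribution is $\bv(A)(S-1)f$, completing the identity on the dense subspace.

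The main obstacle, beyond the compactness of $K$ which is deliberately assumed, is to make the term-by-term split of \eqref{waveop} legitimate: the formula \eqref{waveop} holds only as a limit in the mean, so I would first verify that each of the three pieces defines, on the dense domain, a genuine element of $\H$ and that the splitting commutes with that limiting procedure. A secondary point requiring care is the precise normalisation connecting $\U$ and $\FF$, which must be tracked so that the angular kernel $S(\kappa^2,\cdot,\cdot)$ appearing in \eqref{asympt} is indeed the one produced by $\FF\,S$ in the above computation; any radial factor there is harmless since it commutes with $S(\kappa^2)$.
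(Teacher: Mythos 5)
Your proposal is correct and takes essentially the same route as the paper: the paper's own proof is the one-line observation that formulas \eqref{asympt} and \eqref{waveop} together with Proposition \ref{tralala} and Conjecture \ref{hyp} lead directly to \eqref{W-}, and your argument simply fills in that insertion (Fourier inversion for the term $1$, the fibrewise angular action of $S(\kappa^2)$ for the term $\bv(A)(S-1)$, and the conjecture for $K$). The caveats you flag at the end (the limit-in-the-mean interpretation of \eqref{waveop} and the normalisation relating $\U$ to $\FF$) are exactly the points the paper leaves implicit, and your treatment of them is sound.
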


\begin{proof}
Formulas \eqref{asympt} and \eqref{waveop} together with Proposition \ref{tralala} and Conjecture \ref{hyp} lead directly to the result.
\end{proof}

\begin{corollary}
If the hypotheses of Theorem \ref{main} are satisfied, then
\begin{equation*}
W_+ =1+ \big(1-\bv(A)\big)(S^*-1) + K
\end{equation*}
with $K$ a compact operator in $\H$.
\end{corollary}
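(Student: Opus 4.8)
The plan is to deduce the formula for $W_+$ directly from Theorem~\ref{main} by means of the algebraic relation linking $W_+$, $W_-$ and $S$. The first step is to establish the identity $W_+ = W_- S^*$. Taking the adjoint of $S = W_+^* W_-$ gives $S^* = W_-^* W_+$, whence
\begin{equation*}
W_- S^* = W_- W_-^* W_+ = (1 - P_{\mathrm p}) W_+ = W_+,
\end{equation*}
where the second equality uses $W_- W_-^* = 1 - P_{\mathrm p}$ from \eqref{eq-om}, and the last one uses that the range of $W_+$ is precisely $(1-P_{\mathrm p})\H$, again by \eqref{eq-om}. This step is where asymptotic completeness enters, through the equality of the range projections of the two wave operators; it is the only place the geometry of the problem is used.

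Next I would substitute the expression \eqref{W-} for $W_-$ and expand, using only that $S$ is unitary:
\begin{equation*}
W_+ = \big(1 + \bv(A)(S-1) + K\big) S^* = S^* + \bv(A)(S-1)S^* + K S^*.
\end{equation*}
The sole nontrivial simplification is $(S-1)S^* = SS^* - S^* = 1 - S^*$, after which $W_+ = S^* + \bv(A)(1-S^*) + K S^*$. Rearranging the first two terms gives $S^* + \bv(A)(1 - S^*) = 1 + (1-\bv(A))(S^*-1)$, so that $W_+ = 1 + (1-\bv(A))(S^*-1) + K S^*$; notice that no commutation between $\bv(A)$ and $S$ is needed.

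It remains only to absorb $K S^*$ into the compact remainder. Since $K$ is compact by Conjecture~\ref{hyp} and $S^*$ is bounded, the product $K S^*$ is compact, and renaming it $K$ yields the stated formula. I expect no serious obstacle here: the content is a one-line computation, and the only point requiring genuine justification is the reduction $W_+ = W_- S^*$, which is standard once asymptotic completeness is granted.
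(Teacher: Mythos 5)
Your proposal is correct and follows exactly the paper's route: the paper likewise starts from $S^* = W_-^* W_+$, uses asymptotic completeness via \eqref{eq-om} to get $W_+ = W_- W_-^* W_+ = W_- S^*$, and then substitutes \eqref{W-} and simplifies. Your write-up merely spells out the algebra (in particular $(S-1)S^* = 1-S^*$ and the absorption of $K S^*$ into the compact remainder) that the paper leaves implicit.
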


\begin{proof}
Starting from the relation $W_-^*W_+ = S^*$, one deduces from the asymptotic completeness that
$$
W_+ = W_-W_-^*W_+ = W_-S^*\ .
$$
By taking the expression \eqref{W-} into account, one readily obtains the result.
\end{proof}

\section{Constructing the algebras}\label{secal}

In this section we define the $C^*$-algebras suitable for the
scattering system introduced in the previous section. In a different context, these algebras were studied in \cite{Georgescu} from which we recalled some preliminary results. We also  mention that very
similar algebras have already been introduced a long time ago
\cite{BC1, BC2,CH}.

The forthcoming algebras are constructed with the help of the operator $H_0$
and with the generator $A$ of dilations. The crucial property is that
$A$ and $B:=\frac{1}{2}\ln(H_0)$ satisfy the canonical
commutation relation $[A,B]=i$ so that $A$ generates translations in
$B$ and vice versa,
\begin{equation*}
e^{itB} A e^{-itB} = A+t, \quad e^{isA} B e^{-isA} = B-s.
\end{equation*}
Furthermore, recall that these operators are reduced by the decomposition \eqref{decomposition} and that the spectrum of $H_0$ is $\R_+$ and that of $A$ is $\R$. In the following paragraphs we shall freely use the isomorphism between $L^2(\R^3)$ and $L^2(\R_+;\HH)$ with $\HH = L^2(\S^2)$.

Let $\E_o$ be the closure in $\B(\H)$ of the algebra generated by
elements of the form $\eta(A)\psi(H_0)$, where $\eta$ is a norm-continuous function on $\R$ with values in $\K(\HH)$ and which has limits at $\pm \infty$, and
$\psi$ is a norm-continuous function $\R_+$ with values in $\K(\HH)$ and which has limits at $0$ and
at $+\infty$. Stated differently, $\eta\in C\big(\bR;\K(\HH)\big)$, where $\bR=[-\infty,+\infty]$, and $\psi \in C\big(\bRp;\K(\HH)\big)$ with
$\bRp=[0,+\infty]$, the continuity referring to the norm topology on $\K(\HH)$. Let $\J$ be the norm closed
algebra generated by $\eta(A)\psi(H_0)$ with similar functions $\eta$ and
$\psi$ for which the above limits vanish. Clearly, $\J$ is a closed ideal in $\E_o$ and is equal to the algebra $\K(\H)$ of compact operators in $\H$. These statements and the following ones follow from \cite[Sec.~3.5]{Georgescu} via the Mellin transform.

To describe the quotient $\E_o/\J$ we consider the square $\blacksquare:=\bRp\times \bR$ whose boundary $\square$ is the
union of four parts: $\square =B_1\cup B_2\cup B_3\cup
B_4$, with $B_1 = \{0\}\times \bR$, $B_2 = \bRp \times \{+\infty\}$, $B_3 = \{+\infty\}\times \bR$ and $B_4 = \bRp\times \{-\infty\}$. We can also view $C(\square)$ as the subalgebra of
\begin{equation*}
C\big(|\underline{\overline{\hbox{\phantom{\c L}}}}| \big) :=
C(\bR)\oplus C(\bRp)\oplus C(\bR)\oplus C(\bRp)
\end{equation*}
given by elements
$\Gamma:=(\Gamma_1,\Gamma_2,\Gamma_3,\Gamma_4)$ which coincide at the
corresponding end points, that is, for instance,
$\Gamma_1(+\infty)=\Gamma_2(0)$. Then $\E_o/\J$ is isomorphic to
$C\big(\square;\K(\HH)\big)$, and if we denote the quotient map by
$$
q: \E_o\to \E_o/\J \cong  C\big(\square;\K(\HH) \big)
$$ then the image $q\big(\eta(A)\psi(H_0)\big)$ in $C\big(|\underline{\overline{\hbox{\phantom{\c L}}}}| ;\K(\HH)\big)$
is given by $\Gamma_1 = \eta(\cdot)\psi(0)$, $\Gamma_{2} =
\eta(+\infty)\psi(\cdot)$,
$\Gamma_{3} = \eta(\cdot)\psi(+\infty)$ and $\Gamma_{4} =
\eta(-\infty)\psi(\cdot)$.

By construction the algebra $\E_o$ does not contain a unit. For a non-unital algebra $\A$, we generically write $\A^\sim$ for the algebra $\A$ with a unit added.  So let us set $\E:=\E_o^\sim$ for the closed $C^*$-algebra generated by $\E_o$ and by the operator $1\in \B(\H)$. Then the quotient algebra $\Q:=\E/\J$ is also unital and can be identified with the algebra $C\big(\square ;\K(\HH)\big)^\sim$, the algebra generated by $C\big(\square ;\K(\HH)\big)$ and by the constant function $1$ on $\square$.
Clearly, this algebra is isomorphic to the algebra $C\big(\S;\K(\HH)\big)^\sim$.
Thus, the algebras introduced so far correspond to the one presented in the introduction and it follows that the trace $\Tr$ on $\H$ induces a functional
$\Tr_*:K_0(\K(\H))\to \Z$,
 $\Tr_*([P]_0) = \Tr(P)$ for any projection $P \in \K(\H)$ and this functional is
an isomorphism. Similarly the winding number $w(\Gamma)$ of the determinant $\square\ni t\mapsto \det\big(\Gamma(t)\big)$ induces
a functional
$w_*:K_1(\Q)\to\Z$, $w_*([\Gamma]_1)= w(\Gamma)$ for any unitary $\Gamma \in \Q$
which also yields to an isomorphism.
As a consequence, the groups $K_0(\J)$ and $K_1(\Q)$ are both isomorphic to $\Z$
and so the index map in \eqref{eq-lev-ab} reduces to an homomorphism $\Z\to\Z$, and hence to a multiple of the identity $n\,\mbox{\rm id}$ for some $n\in\Z$.
In the next proposition we show that the factor $n$ is equal to $1$ for the algebras introduced above.

\begin{proposition}\label{thm-norm}
For the extension defined by $\E$ and $\J$
the factor $n$ in the index map in \eqref{eq-lev2} is equal to $1$.
\end{proposition}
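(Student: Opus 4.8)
The plan is to determine the integer $n$ by evaluating the index map $\ind\colon K_1(\Q)\to K_0(\J)$ on an explicit generator of $K_1(\Q)\cong\Z$ and reading off the trace of its image in $K_0(\J)\cong\Z$. Since both functionals $w_*$ and $\Tr_*$ are defined through stabilization and $K$-theory is stable, I would first reduce the coefficient algebra $\K(\HH)$ to a rank-one corner: fixing a one-dimensional projection $p_0\in\K(\HH)$ and keeping only coefficients proportional to $p_0$, the inclusion induces an isomorphism on $K$-theory and reduces the problem to the scalar extension $0\to\K\to\E^{\mathrm{sc}}\to C(\square)\to 0$, where $\E^{\mathrm{sc}}$ is the algebra generated by the scalar operators $f(A)g(H_0)$ with $f\in C(\bR)$ and $g\in C(\bRp)$ acting on the radial space $\H_r$. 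Using $B=\frac12\ln(H_0)$ together with $[A,B]=i$, the Mellin transform realizes $\E^{\mathrm{sc}}$ on $L^2(\R)$ as the algebra generated by continuous functions (with limits at $\pm\infty$) of the position operator $A$ and of the momentum operator $B$; here the identifications $\J=\K$ and $\E^{\mathrm{sc}}/\J\cong C(\square)$ are precisely those recorded above. As $\square$ is homeomorphic to $\S$, the group $K_1(\Q)\cong\Z$ is generated by a unitary $\Gamma$ whose symbol winds once around the boundary circle $\square=B_1\cup B_2\cup B_3\cup B_4$.

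The heart of the argument is to recognize this extension as the Toeplitz (Wiener--Hopf) extension. Concretely, I would exhibit a partial isometry $W\in\E$ with $q(W)=\Gamma$, so that the index map is computed by the defect projections,
\[
\ind\big([\Gamma]_1\big)=[WW^*]_0-[W^*W]_0 .
\]
The natural candidate is a Wiener--Hopf-type operator combining a half-space condition in one of the conjugate variables with a shift in the other: in the $(A,B)$ realization on $L^2(\R)$ this is the standard model for which $W^*W=1$ while $1-WW^*$ is a one-dimensional projection $P$. Granting this, $\ind([\Gamma]_1)=-[P]_0$ is a generator of $K_0(\J)\cong\Z$, whence $|n|=1$; tracking the orientation of the winding number around $\square$ (fixed by the ordering $B_1,B_2,B_3,B_4$ of the edges) against the convention $\ind([\Gamma]_1)=[WW^*]_0-[W^*W]_0$ then pins the sign and yields $n=1$. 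This is exactly the statement that the extension is the Toeplitz extension and that the index computation reduces to the theorem of Krein--Gohberg.

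I expect the main obstacle to be the construction of the explicit lift $W$ inside $\E$ and the verification that its symbol genuinely has winding number one around $\square$: a function of $A$ alone governs only the two edges $B_1,B_3$ and a function of $H_0$ alone only the two edges $B_2,B_4$, so the generator must mix both variables, and one must check that the resulting symbol traverses $\square$ exactly once. The second delicate point, tightly bound to the first, is the bookkeeping of signs, namely ensuring that the orientation chosen for $w$ on $\square$ and the sign convention for the index map combine to give $n=+1$ rather than $-1$. Once a partial-isometry lift with a rank-one cokernel is in hand, the remaining verifications --- that $W^*W=1$, that $1-WW^*$ is a rank-one projection in $\J$, and that $\Tr$ evaluates it to a generator of $K_0(\J)$ --- are routine, and the reduction steps via stability are standard.
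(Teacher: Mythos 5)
Your overall strategy --- computing $n$ by evaluating $\Tr_*\circ\ind\circ w_*^{-1}$ on a single generator via a partial-isometry lift, after stabilizing the coefficients $\K(\HH)$ to a rank-one corner --- is sound, and the stabilization step is indeed routine by naturality of the index map. But the proposal has a genuine gap located exactly where you yourself flag ``the main obstacle'': you never actually exhibit a partial isometry $W\in\E$ with $W^*W=1$, $1-WW^*$ a rank-one projection, and symbol $q(W)$ of winding number one around $\square$, nor do you carry out the sign bookkeeping beyond announcing that it ``pins the sign.'' This is not a routine remainder: as you correctly note, a function of $A$ alone only governs the edges $B_1,B_3$ and a function of $H_0$ alone only $B_2,B_4$, so the lift must genuinely mix the two conjugate variables, and one must verify both that the candidate operator lies in the specific algebra $\E$ (not merely in some ambient Toeplitz-type algebra) and what its symbol is on each of the four edges. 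Likewise, the identification of the extension $\J\subset\E$ with the Toeplitz extension --- which is what licenses the appeal to Krein--Gohberg --- is in the paper a \emph{consequence} of $n=1$, not an independently available fact one may invoke to prove it. Without the explicit lift, nothing in your argument rules out, say, $n=0$ (non-surjective index map) or $n=-1$; the existence of a partial-isometry lift with rank-one defect over a winding-one symbol is the entire content of the proposition.

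The paper closes this gap not abstractly but with a concrete scattering system: it takes $W=W^{\alpha}_-$, the explicitly known wave operator of the one-point interaction with $\alpha<0$ recalled in Section \ref{secpoint}, namely $W^\alpha_-=1+\frac{1}{2}\big(1+\tanh(\pi A)-i\cosh(\pi A)^{-1}\big)(S^\alpha-1)P_{00}$. This operator belongs to $\E$, is an isometry whose cokernel is the one-dimensional bound-state subspace, so $\mathrm{index}(W^\alpha_-)=-1$, and its symbol is computed edge by edge (on $\H_{00}$ it is the quadruple $(1,s^\alpha,r,1)$, with the half-windings of $s^\alpha$ along $B_2$ and of $r$ along $B_3$ adding up to a single full turn, as in the table of Section \ref{secpoint}). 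Evaluating $\Tr_*\big(\ind([\,\cdot\,]_1)\big)=n\,w_*([\,\cdot\,]_1)$ on this one element simultaneously proves surjectivity of the index map and fixes the orientation, yielding $n=1$. If you wish to complete your route instead, you would have to do work of exactly this kind --- e.g.\ verify that a shift-type operator such as the phase of $A+iB$ lies in the scalar algebra and compute its four edge symbols --- which is no easier than borrowing the point-interaction example; in either case the explicit computation cannot be waved through as ``standard.''
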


\begin{proof} Consider the index map $\ind:K_1(\Q)
{\to} K_0(\J)$ relevant to our algebras, with $\J = \K(\H)$ and $\Q$ isomorphic to the unital algebra $C\big(\S;\K(\HH)\big)^\sim$.
Upon identifying $K_0\big(\K(\H)\big)\cong\Z$ via $\Tr_*$ and
$K_1\big(\big(C(\S;\K(\HH)\big)^\sim\big)\cong\Z$
via $w_*$ we get a group homomorphism
$\Tr_*\circ\mbox{\rm ind}\circ w_*^{-1}: \Z \to \Z$ and hence
$\Tr_*\circ\mbox{\rm ind}\circ w_*^{-1} = n\,\mbox{\rm id}$ for some $n$.
Hence $n$ is determined by the equation
$\Tr_*\big(\ind([\Gamma]_1)\big) = nw_*([\Gamma]_1)$ which
must hold for all elements of $K_1(\Q)$.
We know furthermore that $\Tr_*\big(\ind([\Gamma]_1)\big) = -\mbox{\rm index}(W)$ provided $W\in \E$
is a lift of $\Gamma$ which is a partial isometry. Taking
$W=W^{\alpha}_-$, the wave operator for some point interaction $\alpha<0$ recalled in Section \ref{secpoint}, we infer from our explicit calculation that
$\mbox{\rm index}(W^{\alpha}_-)=-1$ and
$w\big(q(W^{\alpha}_-)\big)=1$. Hence $n=1$.
\end{proof}

Let us now present an alternative description of the $C^*$-algebra $\E$. As already mentioned, the algebra $\E_o$ has been introduced and thoroughly studied in another context in \cite[Sec.~3.5]{Georgescu}. All the proofs of the following statements can be mimicked from the corresponding ones in that paper. We also use the convention of that reference, that is: if a symbol like $T^{(*)}$ appears in a relation, it means that this relation holds for $T$ and for its adjoint $T^*$. The function $\chi$ denotes the characteristic function.

\begin{lemma}\label{appartenance}
An operator $W$ belongs to $\E$ if and only if there exist
$\Gamma_{1}, \Gamma_3\in C\big(\bR;\K(\HH)\big)^\sim$ and $\Gamma_{2},\Gamma_{4} \in C\big(\bRp;\K(\HH)\big)^\sim$
such that the following conditions are satisfied:
\begin{enumerate}
\item[{\rm (i)}] $\lim_{\e \to 0}\|\chi(H_0\leq \e) \;\!(W -\Gamma_1(A))^{(*)}\|=0$, and
$\lim_{\e \to +\infty}\|\chi(H_0\geq \e) \;\!(W -\Gamma_3(A))^{(*)}\|=0$,
\item[{\rm (ii)}] $\lim_{t \to -\infty}\|\chi(A \leq t) \;\!(W -\Gamma_4(H_0))^{(*)}\|=0$, and
$\lim_{t \to +\infty}\|\chi(A \geq t) \;\!(W-\Gamma_2(H_0))^{(*)}\|=0$.
\end{enumerate}
\end{lemma}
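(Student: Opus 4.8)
The plan is to prove the two implications separately, using throughout the Mellin picture of Section \ref{secal}: since $A$ and $B:=\frac12\ln(H_0)$ form a conjugate pair with $[A,B]=i$, the isomorphism $\H\cong L^2(\R;\HH)$ carries $A$ to the (scalar) position-type operator and $B$ to its conjugate, so that $\E_o$ becomes a concrete algebra over the square $\blacksquare=\bRp\times\bR$ whose symbol lives on the boundary $\square$. In this picture the quotient map $q:\E\to\Q\cong C\big(\square;\K(\HH)\big)^\sim$ is explicitly computable on the generators $\eta(A)\psi(H_0)$, and all the statements are obtained by mimicking the corresponding analysis of \cite[Sec.~3.5]{Georgescu}.

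\textbf{Necessity.} Suppose $W\in\E$ and set $\Gamma:=q(W)\in\Q$. Restricting $\Gamma$ to the four edges $B_1,B_2,B_3,B_4$ of $\square$ and adding back the scalar part of the unit produces $\Gamma_1,\Gamma_3\in C\big(\bR;\K(\HH)\big)^\sim$ and $\Gamma_2,\Gamma_4\in C\big(\bRp;\K(\HH)\big)^\sim$. The four limits are first checked on a generator $\eta(A)\psi(H_0)$ by a direct computation; for instance
\begin{equation*}
\big\|\chi(H_0\leq\e)\big(\eta(A)\psi(H_0)-\eta(A)\psi(0)\big)\big\|
\leq \|\eta\|_\infty\,\sup_{0\leq\mu\leq\e}\|\psi(\mu)-\psi(0)\|\longrightarrow 0
\end{equation*}
by continuity of $\psi$ at $0$, and here $\eta(A)\psi(0)=\Gamma_1(A)$ is precisely the restriction of the symbol to $B_1$; the other three limits, and the adjoint versions (using that $q$ is a $*$-homomorphism, so the symbol of $W^*$ is $\Gamma^*$), are analogous. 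Elements of $\J=\K(\H)$ satisfy the four conditions with $\Gamma_i=0$, since $\chi(H_0\leq\e)\to 0$ strongly forces $\|\chi(H_0\leq\e)K\|\to 0$ for any compact $K$. Since every $W\in\E$ is a norm-limit of sums of generators, of compacts and of a scalar, the general case follows by a three-$\e$ argument, the crucial point being that restriction of the symbol to an edge is contractive, so $\|\Gamma_i^{(n)}-\Gamma_i\|_\infty\to 0$ whenever the approximants converge in norm.

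\textbf{Sufficiency.} Conversely, assume the four limit conditions hold. The first task is to glue the $\Gamma_i$ into an element of $\Q$. This is done by exploiting the overlaps of the conditions: in the region where both $H_0$ and $A$ are large, the second parts of (i) and (ii) give $\chi(H_0\geq\e)\chi(A\geq t)\big(\Gamma_3(A)-\Gamma_2(H_0)\big)\to 0$, which forces the corner identity $\Gamma_3(+\infty)=\Gamma_2(+\infty)$; the other three corners are treated the same way, and the scalar parts are thereby seen to agree around $\square$. Hence $(\Gamma_1,\Gamma_2,\Gamma_3,\Gamma_4)$ defines a single $\Gamma\in C\big(\square;\K(\HH)\big)^\sim=\Q$. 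By surjectivity of $q$ there is a lift $W_0\in\E$ with $q(W_0)=\Gamma$, and by the necessity part $W_0$ satisfies the same four conditions with the same $\Gamma_i$. Subtracting, the remainder $D:=W-W_0$ satisfies all four conditions with $\Gamma_i=0$, for both $D$ and $D^*$. It then remains to prove that $D\in\J=\K(\H)$, for this gives $W=W_0+D\in\E$.

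\textbf{Main obstacle.} The heart of the argument, and the step I expect to be genuinely delicate, is the last one: showing that an operator $D$ whose boundary localizations $\chi(H_0\leq\e)D^{(*)}$, $\chi(H_0\geq\e)D^{(*)}$, $\chi(A\leq t)D^{(*)}$ and $\chi(A\geq t)D^{(*)}$ all vanish asymptotically must be compact. In the Mellin picture this says that $D$ and $D^*$ are negligible once one localizes outside a relatively compact region of the joint spectrum of $\big(A,\tfrac12\ln H_0\big)$, and the conclusion is reached by approximating $D$ by doubly-localized operators and invoking the compactness of products of the form $h_1(A)\,h_2(H_0)$, in combination with the $\K(\HH)$-valued structure that pins the result to $\K(\H)$ rather than to a tensor with $\mathrm{id}_{\HH}$. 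This is precisely the technical content that is carried over from \cite[Sec.~3.5]{Georgescu}, and it is where essentially all the analysis is concentrated; the remaining manipulations (edge restrictions, corner matching, three-$\e$ estimates) are routine once it is in place.
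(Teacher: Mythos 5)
Your overall route is the same as the paper's, which in fact contains no self-contained proof of this lemma: it merely records that Lemma \ref{appartenance}, together with the compactness criterion (conditions (i)--(ii) with $\Gamma_j=0$ characterize $\K(\H)$) that you correctly isolate as the analytic core, follows from \cite[Sec.~3.5]{Georgescu} via the Mellin transform. So deferring the main obstacle to that reference puts you on a par with the paper, and your scaffolding (edge symbols, corner matching, lift and subtract) is the standard way of organizing that argument. In the corner-matching step, note that you additionally need the lower bound $\|\chi(H_0\geq\e)\chi(A\geq t)\|=1$ for all $\e,t$ (available because $A$ and $B=\frac{1}{2}\ln H_0$ form a conjugate pair): without it, smallness of $\chi(H_0\geq\e)\chi(A\geq t)\big(\Gamma_3(A)-\Gamma_2(H_0)\big)$ would force nothing at the corner.

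There is, however, a genuine error in the one estimate you display. In
\begin{equation*}
\big\|\chi(H_0\leq\e)\big(\eta(A)\psi(H_0)-\eta(A)\psi(0)\big)\big\|
\leq \|\eta\|_\infty\,\sup_{0\leq\mu\leq\e}\|\psi(\mu)-\psi(0)\|
\end{equation*}
you have tacitly commuted $\chi(H_0\leq\e)$ past $\eta(A)$, which is illegitimate: $\eta(A)$ does not preserve the spectral subspaces of $H_0$, since the dilation group rescales it, $e^{isA}\chi(H_0\leq\e)\;\!e^{-isA}=\chi(H_0\leq e^{2s}\e)$. Concretely, if $\psi(0)=0$ and $\psi$ vanishes on $[0,\e_0]$, your right-hand side is $0$ for all $\e\leq\e_0$, while the left-hand side equals $\|\chi(H_0\leq\e)\eta(A)\psi(H_0)\|$, which is in general nonzero. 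The inequality is valid only for the adjoint ordering $\chi(H_0\leq\e)\big(\psi(H_0)-\psi(0)\big)^*\eta(A)^*$, where the cutoff sits next to the function of $H_0$ --- this is precisely why the lemma imposes the conditions on $(W-\Gamma_1(A))^{(*)}$, i.e.\ on the operator \emph{and} its adjoint. For the stated ordering an extra argument is needed: for $\eta$ of the form $c+\int_\R g(s)\;\!e^{isA}\;\!\d s$ with $g\in L^1(\R)$ (a class containing the resolvents $(A-z)^{-1}$ by Laplace transform, hence sup-norm dense among functions with equal limits at $\pm\infty$; one fixed interpolating function such as $\tanh(\pi\;\!\cdot)$, whose inverse Fourier transform is a principal-value kernel integrable away from the origin, covers the remaining direction), the covariance above gives
\begin{equation*}
\chi(H_0\leq\e)\;\!\eta(A)\big(\psi(H_0)-\psi(0)\big)
= c\,\chi(H_0\leq\e)\big(\psi(H_0)-\psi(0)\big)
+\int_\R g(s)\;\!e^{isA}\;\!\chi(H_0\leq e^{-2s}\e)\big(\psi(H_0)-\psi(0)\big)\;\!\d s\ ,
\end{equation*}
whose norm tends to $0$ by dominated convergence; a three-$\e$ argument in $\eta$ then handles the general case, as you indicate. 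With this repair, applied mutatis mutandis at the other three edges, your necessity step becomes sound, and the remainder of your proposal agrees with the paper's citation-based treatment.
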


Let us note that conditions {\rm (i)} and {\rm (ii)} can also be rewritten as
\begin{equation*}
\lim_{t\to -\infty}\|\chi(H_0\leq 1) \;\!U^A_t\;\!(W -\Gamma_1(A))^{(*)}\;\!U^A_{-t}\|=0, \quad
\lim_{t \to +\infty}\|\chi(H_0\geq 1) \;\!U^A_t\;\!(W -\Gamma_3(A))^{(*)}\;\!U^A_{-t}\|=0\ ,
\end{equation*}
\begin{equation*}
\lim_{t\to -\infty}\|\chi(A \leq 0)\;\!U^B_{-t}\;\!(W -\Gamma_4(H_0))^{(*)}\;\!U^B_t\|=0, \quad
\lim_{t \to +\infty}\|\chi(A \geq 0)\;\!U^B_{-t}\;\!(W -\Gamma_2(H_0))^{(*)}\;\!U^B_t\|=0\ ,
\end{equation*}
where $U^A_t=e^{-itA}$, $U^B_t=e^{-itB}$ and $B=\frac{1}{2}\ln(H_0)$. It also follows from these conditions that one necessarily has in the strong topology  $s-\lim_{t\to \pm \infty} U^A_t\;\!W\;\!U^A_{-t} = \Gamma_{3/1}(A)$ and $s-\lim_{t\to \pm \infty}U^B_{-t}\;\!W \;\!U^B_t = \Gamma_{2/4}(H_0)$.
As a final remark in this section, let us mention that $W$ is a compact operator in $\H$ if and only if it satisfies conditions {\rm (i)} and {\rm (ii)} of Lemma \ref{appartenance} with $\Gamma_j = 0$ for $j \in \{1,2,3,4\}$. And so, such a property holds for all elements in $\J$.

\section{The affiliation property and its consequences}\label{secandsowhat}

In this section, we show how the affiliation property is solved by formula \eqref{newformula}. More precisely, we prove that the operator $W_-$ belongs to the algebra $\E$, and derive the expressions for the related operators $\Gamma_j$ with $j \in \{1,2,3,4\}$. In fact, all these operators are already known, but our approach gives a global framework, and provides a stronger convergence to them.

As already mentioned, the scattering operator $S$ is a function of the Laplace operator with a function $\R_+\ni \lambda\mapsto S(\lambda)-1 \in \K(\HH)$ that is continuous in the Hilbert-Schmidt norm. A fortiori, this map is continuous in the norm topology on $\K(\HH)$, and in fact the map $\lambda \mapsto S(\lambda)$ belongs to $C\big(\bRp;\K(\HH)\big)^\sim$. Indeed, it is well known that $S(\lambda)$ converges to $1$ as $\lambda \to \infty$, see for example \cite[Prop.~12.5]{AJS}. For the low energy behavior, see \cite{JK}, where the norm convergence of the $S(\lambda)$ for $\lambda \to 0$ is proved under a more restrictive condition on the potential: $\beta>5$ in \eqref{condV}. The picture is the following: If $H$ does not possess a $0$-energy resonance, then $S(0)$ is equal to $1$, but if such a resonance exists, then $S(0)$ is equal to $1-2P_{00}$, where  $P_{00}$ denotes the orthogonal projection on the one-dimensional subspace of $\HH$ spanned by $Y_{00}$.

Now, the operator $\bv(A)$ in \eqref{newformula} is a function of the generator $A$ of dilations with a function $\bv$ which belongs to $C\big(\bR, \B(\HH)\big)$, where $\B(\HH)$ is endowed with the strong operator topology. Indeed, this easily follows from Proposition \ref{tralala}. However, once multiplied with the operator $\big(S(0)-1\big)$, the map $\R \ni \xi \mapsto \bv(\xi)\big(S(0)-1\big) \in \K(\HH)$ is norm continuous and admits limits at $\pm \infty$. In fact, if $H$ does not possess a $0$-energy resonance then this map is trivial, but if such a resonance exists then the corresponding map is not trivial since $S(0)-1\neq 0$.

By collecting these information one can now prove the main result of this section :

\begin{proposition}\label{lemasym}
Let $V$ satisfy condition \eqref{condV} with $\beta>5$ and assume that Conjecture \ref{hyp} is verified.
Then the wave operator $W_-$ belongs to $\E$ and its image through the quotient map $q: \E\to \E/\J$ is a unitary element of $C\big(\square;\K(\HH) \big)^\sim$ given by
\begin{equation}\label{quadruple}
\big(1+\bv(\cdot)
\big(S(0)-1\big), S(\cdot),1,1\big) \ .
\end{equation}
In other words, $\Gamma_1 =1+\bv(\cdot)\big(S(0)-1\big)$, $\Gamma_2=S(\cdot)$, $\Gamma_3 =1$ and $\Gamma_4 = 1$.
\end{proposition}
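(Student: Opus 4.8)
The plan is to feed the structural formula $W_- = 1 + \bv(A)(S-1) + K$ of Theorem \ref{main} into the generator description of $\E_o$ from Section \ref{secal}, and then to read off $q(W_-)$ by continuity of $q$. The first step is to recognize $S-1$ as a function of $H_0$: setting $\psi(\lambda):=S(\lambda)-1$ one has $S-1=\psi(H_0)$ with $\psi\in C(\bRp;\K(\HH))$, $\psi(+\infty)=0$ and $\psi(0)=S(0)-1$. This is exactly where the hypothesis $\beta>5$ is used, since the norm continuity of $\lambda\mapsto S(\lambda)$ at $\lambda=0$ is the input borrowed from \cite{JK}. As $K$ is compact it lies in $\J\subset\E$, and $1\in\E$ by construction, so the whole matter reduces to proving $\bv(A)\psi(H_0)\in\E_o$ and computing its image under $q$.

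The obstacle in placing $\bv(A)\psi(H_0)$ in $\E_o$ is that it is formally a generator $\eta(A)\psi(H_0)$ but with $\eta=\bv$ taking values in $\B(\HH)$ rather than $\K(\HH)$ and being only strongly continuous. I would circumvent this by exploiting the compactness of $\psi$ to truncate in the angular variable. Let $\Pi_N$ be the orthogonal projection onto $\bigoplus_{\l\le N}\H_\l$, which commutes with $A$ and $H_0$, and set $\eta_N:=\bv\,\Pi_N=\sum_{\l\le N}\varphi_\l\,P_{\H_\l}$ and $\psi_N:=\Pi_N\psi\Pi_N$. By Proposition \ref{tralala} each $\varphi_\l$ is a scalar function, continuous on $[-\infty,\infty]$ with $\varphi_\l(-\infty)=0$ and $\varphi_\l(+\infty)=1$, so $\eta_N$ is a finite sum of continuous scalar functions times fixed finite-rank projections; hence $\eta_N\in C(\bR;\K(\HH))$ with $\eta_N(-\infty)=0$ and $\eta_N(+\infty)=\Pi_N$, and $\eta_N(A)\psi_N(H_0)=\bv(A)\Pi_N\psi(H_0)\Pi_N$ is a genuine generator of $\E_o$. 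To pass to the limit, I would note that since $\bRp$ is compact and $\psi$ is norm continuous, the image $\{\psi(\lambda)\}_{\lambda\in\bRp}$ is a norm-compact subset of $\K(\HH)$, on which the truncations $T\mapsto\Pi_N T\Pi_N$ converge uniformly to the identity; thus $\sup_\lambda\|\psi(\lambda)-\Pi_N\psi(\lambda)\Pi_N\|\to 0$, and as $\|\bv(A)\|=1$ this gives $\|\bv(A)\psi(H_0)-\eta_N(A)\psi_N(H_0)\|\to 0$. Hence $\bv(A)\psi(H_0)\in\E_o$ and $W_-\in 1+\E_o=\E$.

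For the quotient image I would invoke continuity of $q$ together with the explicit recipe $q(\eta(A)\psi(H_0))=(\eta(\cdot)\psi(0),\,\eta(+\infty)\psi(\cdot),\,\eta(\cdot)\psi(+\infty),\,\eta(-\infty)\psi(\cdot))$. Applied to $\eta_N,\psi_N$, and using $\eta_N(+\infty)=\Pi_N$, $\eta_N(-\infty)=0$, $\psi(+\infty)=0$, this is $(\bv(\cdot)\Pi_N\psi(0)\Pi_N,\,\Pi_N\psi(\cdot)\Pi_N,\,0,\,0)$. Letting $N\to\infty$, the first two components converge in the relevant sup-norms because $\psi(0)$ and the values $\psi(\cdot)$ are compact while $\|\bv\|\le 1$; adding $q(1)=(1,1,1,1)$ and $q(K)=0$ yields $q(W_-)=(1+\bv(\cdot)(S(0)-1),\,S(\cdot),\,1,\,1)$, which is \eqref{quadruple}. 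One checks separately that $\bv(\cdot)(S(0)-1)\in C(\bR;\K(\HH))$, which is the standard fact that a strongly continuous bounded-operator-valued function multiplied by a fixed compact operator is norm continuous, as already noted before the proposition. Finally, unitarity of $q(W_-)$ follows formally: $q(W_-^*W_-)=q(1)=1$ and, by \eqref{eq-om}, $q(W_-W_-^*)=q(1-P_{\mathrm p})=1$ since $P_{\mathrm p}$ is finite-rank, hence compact, and therefore annihilated by $q$.

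The step I expect to be the main obstacle is precisely the membership $\bv(A)\psi(H_0)\in\E_o$, that is, making rigorous how multiplication of the merely strongly continuous, non-compact-valued symbol $\bv$ by the compact-valued $\psi$ compactifies it into $\E_o$. The crux is the uniform angular truncation $\sup_\lambda\|\psi(\lambda)-\Pi_N\psi(\lambda)\Pi_N\|\to 0$, which rests on the norm-compactness of the image of $\psi$ and hence, at $\lambda=0$, on the low-energy norm continuity of the scattering matrix that forces $\beta>5$. Everything downstream — membership in $\E$, the computation of $q(W_-)$, and unitarity — is then bookkeeping built on the continuity of $q$ and the explicit boundary formula.
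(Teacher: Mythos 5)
Your proposal is correct, and it reproduces the paper's overall skeleton: reduce via Theorem \ref{main} to showing $\bv(A)(S-1)\in\E$, read off the four boundary components from the explicit formula for $q$ on generators, and obtain unitarity of $q(W_-)$ from the fact that $W_-$ is a partial isometry whose cokernel projection $P_{\mathrm p}$ is compact (the paper phrases this as unitarity in the Calkin algebra, a fortiori in $\Q$; your computation $q(W_-W_-^*)=q(1-P_{\mathrm p})=1$ is the same argument). Where you genuinely diverge is on the crux, the membership $\bv(A)(S-1)\in\E_o$. The paper passes to the spectral representation of $A$ via the Mellin transform and invokes the structural identification $\U\E\U^{-1}=C^*[\eta(Q)\psi(P)]\otimes\K(\HH)$ from \cite{Georgescu}, concluding because $\U(S-1)\U^{-1}$ is a $C(\bR)\otimes\K(\HH)$-valued function of $P$ while $\U\bv\U^{-1}$ is a $\B(\HH)$-valued function of $Q$; the observation that the merely strongly continuous, non-compact symbol is harmless once multiplied by a compact-valued one is left at the level of a remark. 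Your angular truncation makes exactly this point rigorous by hand: since $\Pi_N$ commutes with both $A$ and $H_0$, the operators $\eta_N(A)\psi_N(H_0)=\bv(A)\Pi_N\psi(H_0)\Pi_N$ are honest generators of $\E_o$ (each $\varphi_\l$ is scalar, continuous on $[-\infty,+\infty]$, and the angular projections are finite rank), and the uniform estimate $\sup_\lambda\|\psi(\lambda)-\Pi_N\psi(\lambda)\Pi_N\|\to 0$ — which correctly rests on norm compactness of the image of $\psi$ over $\bRp$, hence on the $\beta>5$ low-energy continuity from \cite{JK} and the high-energy limit $S(\lambda)\to 1$ — gives norm approximation of $\bv(A)(S-1)$ by elements of $\E_o$. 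This buys two things: the argument is self-contained, needing only the generator description of $\E_o$ and the boundary formula for $q$ rather than the tensor-product theorem of \cite{Georgescu}; and the boundary values $\Gamma_1=1+\bv(\cdot)(S(0)-1)$, $\Gamma_2=S(\cdot)$, $\Gamma_3=\Gamma_4=1$ emerge with explicit norm control from $\eta_N(+\infty)=\Pi_N$, $\eta_N(-\infty)=0$, $\psi(+\infty)=0$, where the paper merely asserts they are ``easily calculated''. The paper's route is shorter given the cited structural result and yields the abstract identification of $\E$ along the way; yours is more elementary and, arguably, closer to a complete proof of the membership step.
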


\begin{proof}
It has been proved in Theorem \ref{main} that $W_- =1+ \bv(A)(S -1) + K$. Clearly, the operators $1$ and $K$ belong to $\E$ and thus one  only has to show that $\bv(A)(S-1)\in \E$.
As mentioned above, the map $\lambda \mapsto S(\lambda)-1$ belongs to $C\big(\bRp;\K(\HH)\big)$ but the map $\xi \mapsto \bv(\xi)$ only belongs to $C\big(\bR, \B(\HH)\big)$. However, this lack of compactness for the image of $\bv(\cdot)$ does not bother since
$\bv(A)$ is multiplied by $S-1$ and, for the angular part the multiplication of a bounded operator with a compact operator is compact. More precisely, let us work in the spectral representation of $A$ which is obtained via the Mellin transform. The corresponding Hilbert space is $L^2(\R;\HH)$, and in that representation the operator $A$ corresponds to the multiplication by the variable $Q$ and the operator $B=\frac{1}{2}\ln(H_0)$ corresponds to its conjugate operator $P$. Clearly, this Hilbert space is isomorphic to $L^2(\R)\otimes \HH$ and let $\U$ denote the final isomorphism between $\H$ and the latter space. We recalled this construction in order to use a result from \cite[Sec.~3.5]{Georgescu}, namely
that $\U\E\U^{-1}$ is equal to $C^*[\eta(Q)\psi(P)]\otimes \K(\HH)$ with $C^*[\eta(Q)\psi(P)]$ the closure in $\B\big(L^2(\R)\big)$ of the algebra generated by products of the form $\eta(Q)\psi(P)$ with $\eta,\psi \in C(\bR)$.
Then, one concludes by observing that $\U(S-1)\U^{-1}= \tilde{\psi}(P)$ with $\tilde{\psi}\in C(\bR)\otimes\K(\HH)$ and $\U\bv \U^{-1}= \tilde{\eta}(Q)$ with $\tilde{\eta} \in C(\bR)\otimes\B(\HH)$, which means that the product $\U\bv(A)(S-1)\U^* $ belongs to $C^*[\eta(Q)\psi(P)]\otimes \K(\HH)$. One has thus obtained $\bv(A)(S-1) \in \E$.

Now, the unitarity of $q(W_-)$ follows from the fact that $W_-$ is a partial isometry with finite kernel and co-kernel, and thus its image through the quotient by the compact operators is a unitary element of the Calkin algebra. A fortiori this image is also unitary in the smaller algebra $\Q$. Finally, the asymptotic operators are easily calculated. Since the compactor operator $K$ does not give any contribution for them, one has to take care of the contributions coming from the restrictions of $1+\bv(A)(S-1)$ as explained in Section \ref{secal}.
\end{proof}

Let us now explain why the asymptotic operators $\Gamma_1(A),\Gamma_3(A)$ and $\Gamma_2(H_0),\Gamma_4(H_0)$ are very natural, and how they could be guessed. It is mentioned at the end of Section \ref{secal} that if $W_-$ belongs to $\E$, then one necessarily has $\Gamma_{2/4}(H_0) = s-\lim_{t\to \pm \infty}U^B_{-t}\;\!W_- \;\!U^B_t$. But the intertwining relation for the wave operator, the invariance principle and the asymptotic completeness imply that
$s-\lim_{t\to -\infty}U^B_{-t}\;\!W_- \;\!U^B_t = W_-^* \;\! W_-$ and that $s-\lim_{t\to +\infty}U^B_{-t}\;\!W_- \;\!U^B_t = W_+^* \;\! W_-$. The former is equal to $1$ by the relation \eqref{eq-om}, and the latter is by definition the scattering operator. What we want to emphasize in
this paper is that the convergences to these operators do not only hold in the strong topology, but in the stronger topology indicated by Lemma \ref{appartenance}.

The operator $\Gamma_3(A)$ corresponds to the asymptotic of the wave operator $W_-$ at high energy. Heuristically, it is not surprising that the wave operator is close to the identity at high energy. In fact, statements like
\begin{equation*}
(W_--1)\;\!\chi(H_0\geq \varepsilon)\;\!\chi(A\geq 0) \in \K(\H) \qquad \hbox{for all } \varepsilon >0\ ,
\end{equation*}
(see for example \cite{Enss,P}) are a weaker formulation of both our statements on the convergence of $W_-$ to $\Gamma_3(A)$ and to $\Gamma_4(H_0)$.

The operator $\Gamma_1(A)$ and the convergence of the wave operator to it deserves a special attention. It is easily observed that the following equality holds:
\begin{equation*}
e^{-itA}\;\!W_-(H_0+V,H_0)\;\!e^{itA} = W_-(H(t),H_0)\ ,
\end{equation*}
where $H(t) = H_0 + e^{-2t}V(e^{-t} \cdot)$. For clarity, the dependence of $W_-$ on both self-adjoint operators used to define it is mentioned. It has been proved in \cite{AGH,AGHH} that the limit, as $t \to -\infty$, of the operator $H(t)$ converges in the resolvent sense to a zero-range perturbation of the Laplacian. More precisely, if $H \equiv H(0)$ does not possess a $0$-energy resonance, then $H(t)$ converges in the resolvent sense to $-\Delta$, but if $H$ possesses a $0$-energy resonance, it converges to $H^0$, the one point perturbation of the Laplacian with the parameter equal to $0$. For completeness, the one-point perturbation systems in $\R^3$ are briefly recalled in Section \ref{secpoint}. The topology of the convergence of the resolvent depends on the presence or the absence of a $0$-energy eigenvalue: norm topology if there is no $0$-energy eigenvalue, strong topology otherwise. However, it seems to us that these convergences could still be improved by considering the operator
\begin{equation*}
\chi(H_0\leq 1)\;\!\big((H(t)-z)^{-1}-(H(-\infty)-z)^{-1}\big)
\end{equation*}
for $z \in \C\setminus \R$.

Now, it is known that if the operator $H(t)$ converges in a suitable sense as $t$ tends to $-\infty$ to an operator $H(-\infty)$, then the corresponding operator $W_-(H(t),H_0)$ also converges to $W(H(-\infty),H_0)$, also in a suitable topology, see for example \cite{BG} or \cite[Sec.~10.4.6]{K}. Obviously, if $H(-\infty)$ is the free Laplacian, then $W_-(-\Delta, H_0)=1$, and that is what is expected if $H$ as no $0$-energy resonance. But if $H$ has such a $0$-energy resonance, then $H(t)$ converges to $H^0$, and the corresponding wave operator $W_-(H^0,H_0)$ is equal to
\begin{equation}\label{contribution0}
1 - \big(1 + \tanh(\pi A)-i\cosh(\pi A)^{-1}\big)P_{00}
\end{equation}
as proved in \cite{KR1} and recalled in Section \ref{secpoint}.

Let us finally observe that these results match perfectly with what has been obtained in Proposition \ref{lemasym}. Indeed, as already mentioned, if $H$ has a $0$-energy resonance, then $S(0) = 1-2P_{00}$, and otherwise $S(0)=1$, which corresponds respectively to $\Gamma_1(A)= 1 - \big(1 + \tanh(\pi A)-i\cosh(\pi A)^{-1}\big)P_{00}$ and to $\Gamma_1(A)=1$. And as a final comment, let us note an apparently new observation on the one point interaction systems: For the parameter equal to $0$ and $\infty$, the corresponding wave operators can be seen as the restriction at energy $0$ of the wave operators for more regular Schr\"odinger operators in $\R^3$.

\section{Explicit formulas for the topological Levinson's theorem }\label{allwhatIdontknow}

In this section, we state a precise version of the topological Levinson's theorem already sketched in the introduction. For that purpose, let us start by rewriting a precise version of \eqref{eq-lev-ab}.
Recall that the quadruple $\Gamma:=\big(1+\bv(\cdot)
\big(S(0)-1\big), S(\cdot),1,1\big)$ obtained in \eqref{quadruple}
corresponds to the image of the wave operator $W_-$ in the quotient algebra $\Q\equiv C\big(\square;\K(\HH)\big)^\sim$. Recall furthermore that this quadruple is a unitary element of the mentioned algebra and that the projection $P_{\mathrm p}$ on the subspace spanned by the eigenvectors of $H$, is an element of the algebra $\J\equiv \K(\H)$.

\begin{proposition}
Let $V$ satisfy condition \eqref{condV} with $\beta>5$ and assume that Conjecture \ref{hyp} is verified. Then one has
\begin{equation*}
\ind\big(\big[\big(1+\bv(\cdot)
\big(S(0)-1\big), S(\cdot),1,1\big)\big]_1\big) = -[P_{\mathrm p}]_0
\end{equation*}
where $\ind$ is the index map from the $K_1$-group of $\Q$ to the $K_0$-group of $\J$.
\end{proposition}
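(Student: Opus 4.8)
The plan is to read this statement as the concrete realization of the abstract Levinson's theorem \eqref{eq-lev-ab}, feeding in the explicit computation of $q(W_-)$ carried out in Proposition \ref{lemasym}. By that proposition (which applies precisely because $\beta>5$ and Conjecture \ref{hyp} are assumed), the wave operator $W_-$ belongs to $\E$ and its image under $q$ is exactly the unitary quadruple $\Gamma := \big(1+\bv(\cdot)\big(S(0)-1\big),S(\cdot),1,1\big)$ of $\Q$. Thus $W_-$ is a distinguished preimage of $\Gamma$ under $q$, and the essential point is that this preimage is a \emph{partial isometry}: by \eqref{eq-om} one has $W_-^*W_-=1$ and $W_-W_-^*=1-P_{\mathrm p}$, so that both $W_-^*W_-$ and $W_-W_-^*$ are projections in $\E$ which coincide with the unit modulo $\J$.

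With such a partial isometry lift at hand, I would invoke the description of the index map recalled in the introduction: for a unitary $[\Gamma]_1\in K_1(\Q)$ admitting a partial isometry preimage $W\in\E$, one has $\ind([\Gamma]_1)=[WW^*]_0-[W^*W]_0$. Taking $W=W_-$ this yields immediately
\begin{equation*}
\ind\big([\Gamma]_1\big) = [W_-W_-^*]_0 - [W_-^*W_-]_0 = [1-P_{\mathrm p}]_0 - [1]_0\ .
\end{equation*}
The two projections $1-P_{\mathrm p}$ and $1$ agree modulo $\J=\K(\H)$ — indeed $P_{\mathrm p}$ has finite rank, because as recalled in Section \ref{sec3} the operator $H$ possesses only finitely many eigenvalues, of finite multiplicity — so the difference of their classes is a well-defined element of $K_0(\J)$.

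It then remains to identify $[1-P_{\mathrm p}]_0-[1]_0$ with $-[P_{\mathrm p}]_0$. Working in the unitization $\J^\sim\subset\E$, both $1-P_{\mathrm p}$ and $1$ are projections with the same scalar part under the map $\J^\sim\to\C$, so that their difference represents a class in the reduced group $K_0(\J)$; since $1=(1-P_{\mathrm p})\oplus P_{\mathrm p}$ is an orthogonal decomposition one has $[1]_0=[1-P_{\mathrm p}]_0+[P_{\mathrm p}]_0$, whence $[1-P_{\mathrm p}]_0-[1]_0=-[P_{\mathrm p}]_0$, which is the claimed equality. The only genuinely delicate point is this last bookkeeping with the unitization: one must keep track of the fact that neither $1$ nor $1-P_{\mathrm p}$ lies in the non-unital ideal $\J$, and that it is only their \emph{difference} that defines an element of $K_0(\J)$, the partial isometry structure of $W_-$ being exactly what guarantees that the two defect projections coincide in the quotient $\Q$. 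Everything else reduces to substituting the output of Proposition \ref{lemasym} into the partial-isometry description of the index map.
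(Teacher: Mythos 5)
Your proof is correct and follows essentially the same route as the paper, which presents this proposition as the direct specialization of the abstract index formula \eqref{eq-lev-ab}: one combines Proposition \ref{lemasym} (giving the unitary quadruple as $q(W_-)$), the partial isometry description $\ind([\Gamma]_1)=[WW^*]_0-[W^*W]_0$ from the introduction, and the relations \eqref{eq-om} coming from asymptotic completeness. Your additional bookkeeping in the unitization, showing $[1-P_{\mathrm p}]_0-[1]_0=-[P_{\mathrm p}]_0$ with $P_{\mathrm p}$ finite rank hence in $\J=\K(\H)$, is exactly the detail the paper leaves implicit.
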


Clearly, if $H$ has no $0$-energy resonance, then $S(0)-1$ is equal to $0$, and thus the quadruple $\Gamma$ can be identified with $S(\cdot)$.
But if such a resonance exists, then the contribution of the wave
operator at energy $0$, that is the operator $\Gamma_1(A)$, is not trivial and is given by \eqref{contribution0}. This allows us to obtain a concrete computable version of our topological Levinson's theorem and show
how this operator accounts for the
correction $\nu$ in \eqref{LevMartin}.

As already mentioned for the class of perturbations
we are considering the map $\R_+\ni \lambda \mapsto S(\lambda)-1 \in\K(\HH)$ is continuous in the Hilbert-Schmidt norm. Furthermore,
it is known that this map is even continuously differentiable in the norm topology.
In particular, the on-shell time delay operator
$-i\;\!S(\lambda)^*\;\!S'(\lambda)$ is well defined for each $\lambda
\in \R_+$, see \cite{Jen,Jensen83} for details.
It then follows that :

\begin{theorem}\label{Levnous}
Let $V$ satisfy condition \eqref{condV} with $\beta>5$ and assume that Conjecture \ref{hyp} is verified.
Then for any $p\geq 2$ one has
\begin{equation}\label{new}
2\pi \,\Tr[P_{\mathrm p}] =
 \int_{-\infty}^{\infty} \tr \big[i\big(1-\Gamma_1(\xi)\big)^p
\Gamma_1(\xi)^*\,\Gamma_1'(\xi)\big] \d \xi
+ \int_0^\infty  \tr \big[i\big(1-S(\lambda)\big)^p
S(\lambda)^*\,S'(\lambda)\big] \d
\lambda .
\end{equation}
If the map $\lambda \mapsto S(\lambda)-1$ is continuously differentiable even in
the Hilbert-Schmidt norm, then the above equality holds also for $p=1$.
\end{theorem}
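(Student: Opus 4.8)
The plan is to deduce \eqref{new} from the numerical index theorem by computing the winding number $w\big(q(W_-)\big)$ explicitly. First I would invoke the abstract result \eqref{eq-lev-ab}, $\ind\big([q(W_-)]_1\big) = -[P_{\mathrm p}]_0$, together with Proposition \ref{thm-norm} which fixes $n=1$, and Proposition \ref{lemasym} which identifies $q(W_-)$ with the quadruple $\Gamma = \big(1+\bv(\cdot)(S(0)-1), S(\cdot), 1, 1\big)$ of \eqref{quadruple}. Applying the functionals $\Tr_*$ and $w_*$ to \eqref{eq-lev-ab} then yields
$$
w(\Gamma) = -\Tr(P_{\mathrm p})\ ,
$$
so that everything reduces to evaluating the left-hand side.

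Next I would compute $w(\Gamma)$ as the integral of $\frac{1}{2\pi i}\d\ln\det_{p+1}$ once around the boundary $\square = B_1\cup B_2 \cup B_3\cup B_4$, traversed in the orientation induced by the homeomorphism $\square \cong \S$. The key simplification is that $\Gamma_3 = 1$ and $\Gamma_4 = 1$ are constant, so the edges $B_3$ and $B_4$ contribute nothing, and only $B_1$ (carrying $\Gamma_1$, parametrised by $\xi\in\bR$) and $B_2$ (carrying $S$, parametrised by $\lambda\in\bRp$) survive. Using the differentiated-determinant formula \eqref{eq-wind} on each surviving edge gives
$$
2\pi i\, w(\Gamma) = \int_{-\infty}^{\infty}\tr\big[(1-\Gamma_1(\xi))^p\Gamma_1(\xi)^*\Gamma_1'(\xi)\big]\d\xi + \int_0^\infty \tr\big[(1-S(\lambda))^p S(\lambda)^* S'(\lambda)\big]\d\lambda\ ;
$$
combining with $w(\Gamma) = -\Tr(P_{\mathrm p})$ and absorbing the factor $i$ into the traces reproduces \eqref{new}.

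Before invoking \eqref{eq-wind} I would verify its hypotheses on each edge. On $B_1$, since $S(0)-1$ equals $0$ or the rank-one operator $-2P_{00}$ (per the low-energy analysis recalled in Section \ref{secandsowhat}, whence the condition $\beta>5$), the map $\xi\mapsto\Gamma_1(\xi)-1 = \bv(\xi)(S(0)-1)$ is finite rank and smooth in every Schatten norm, so the $B_1$ integral is harmless for all $p\geq 1$; in the resonant case it is governed by $\varphi_0$ of Corollary \ref{lemsur0} and supplies the half-integer correction. On $B_2$ I would use that $\lambda\mapsto S(\lambda)-1$ takes values in the Hilbert-Schmidt ideal $\mathcal S_2$, is Hilbert-Schmidt continuous, and is norm-differentiable. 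For $p\geq 2$ one has $(1-S)^p\in\mathcal S_1$, so $(1-S)^p S^* S'$ is trace class with merely bounded $S'$, and \eqref{eq-wind} applies after embedding $\mathcal S_2\subset\mathcal S_{p+1}$. For $p=1$, however, $(1-S)S^*S'$ is trace class only when $S'$ is itself Hilbert-Schmidt, which is exactly the extra hypothesis that $\lambda\mapsto S(\lambda)-1$ be continuously differentiable in the Hilbert-Schmidt norm; this accounts for the separate $p=1$ statement.

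The main obstacle is the careful bookkeeping: pinning down the orientation of $\square$ and the resulting signs so that the combination of $w(\Gamma)=-\Tr(P_{\mathrm p})$ with the two surviving edge integrals lands on precisely the right-hand side of \eqref{new}, and checking at each step the trace-class memberships above. The independence of the value on the choice of $p$ (for $p\geq 2$, and for $p=1$ under the stronger hypothesis) is not to be re-derived here but is supplied by the Appendix results quoted after \eqref{eq-wind}, which guarantee that the two integrals together compute the same winding number regardless of $p$.
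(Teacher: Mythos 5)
Your proposal is correct and follows essentially the same route as the paper's proof: apply \eqref{eq-lev2} with $n=1$ from Proposition \ref{thm-norm} to get $w\big(q(W_-)\big)=-\Tr(P_{\mathrm p})$, observe via Proposition \ref{lemasym} that only the edges carrying $\Gamma_1$ and $S(\cdot)$ contribute, evaluate the winding number of the regularized determinant through \eqref{eq-wind}, and obtain the $p$-independence and the $p=1$ case under Hilbert--Schmidt differentiability from the Appendix lemmas, exactly as the paper does. The only minor quibble is your embedding $\mathcal{S}_2\subset\mathcal{S}_{p+1}$: to land directly on exponent $p$ in \eqref{eq-wind} one should use $\mathcal{S}_2\subset\mathcal{S}_p$ (the Appendix lemma with index $p$ yields differentiability of ${\det}_{p+1}$ with integrand exponent $p$), but since you also invoke Lemma \ref{l1} for independence of the exponent, this slip is immaterial.
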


\begin{proof}
We apply \eqref{eq-lev2} with $n=1$ (Proposition~\ref{thm-norm}). Thus
$w(q(W_-)) = -\Tr(P_{\mathrm p})$ and our aim is to determine a computable expression for the l.h.s.. Clearly, only $\Gamma_1$ and $\Gamma_2=S(\cdot)$ will contribute in that
calculation.  $\Gamma_1$ is very regular, $\Gamma_1(\xi)-1$ having finite
rank and being smooth in $\xi$. Thus
under the given hypothesis, the function $\square \ni t
\mapsto \Gamma(t)-1$ is continuous with
values in the space of Hilbert Schmidt operators on $\HH$ and is
continuously differentiable in norm.
Hence $t\mapsto {\det}_2\big(\Gamma(t)\big)$ admits a winding number and this
winding number corresponds to $w\big(q(W_-)\big)$. Moreover, $t\mapsto {\det}_p\big(\Gamma(t)\big)$
is differentiable for all $p\geq 3$ so that the winding number
can be evaluated using equation \eqref{eq-wind} :
\begin{eqnarray*}
w\big(q(W_-)\big) &=& \frac{1}{2\pi i} \int_{\square} \d\ln{\det}_p(
\Gamma) \\
&=& \frac{1}{2\pi i} \int_{-\infty}^{\infty} \tr \big[\big(1-\Gamma_1(\xi)\big)^{p-1}
\Gamma_1(\xi)^*\,\Gamma_1'(\xi)\big] \d \xi \\
&&  + \frac{1}{2\pi i}\int_0^\infty  \tr \big[\big(1-S(\lambda)\big)^{p-1}
S(\lambda)^*\,S'(\lambda)\big] \d \lambda
\end{eqnarray*}
Under the stronger hypotheses that $\lambda \mapsto S(\lambda)-1$ is
continuously differentiable even in the Hilbert-Schmidt norm already
$t\mapsto {\det}_2(\Gamma(t))$ is differentiable so that we can use
the above argument for $p=2$.
\end{proof}

Comparing this result with the form of Levinson's theorem recalled in
\eqref{LevMartin} we see that in the absence of a resonance at $0$, in
which $\Gamma_1=1$, only the $S$-term contributes but contains already
the term proportional to $c$ to be subtracted. In the presence of a
resonance at $0$ the real part of the integral of the term $\Gamma_1$ yields
$$ \Re\Big\{ \frac{1}{2\pi i} \int_{-\infty}^{\infty} \tr
\big[\big(1-\Gamma_1(\xi)\big)^p \Gamma_1(\xi)^*\;\!\Gamma_1'(\xi)\big] \d \xi\Big\} =  \frac{1}{2}$$
which accounts for the correction usually found in
Levinson's theorem. Note that only the real part of this expression is of interest since its imaginary part will cancel with the corresponding imaginary part of term involving $S(\cdot)$.

\begin{remark}
We note that the flexibility of using larger $p$ allows to
obtain from Theorem~\ref{Levnous}
\begin{equation*}
\Tr[P_{\mathrm p}] =
-\nu +\frac{1}{2\pi}\int_0^\infty
\tr \big[\rho(S(\lambda)) iS^*(\lambda)S'(\lambda)\big] \d \lambda .
\end{equation*}
where $\rho:\S\to \R$ is any function of the form $\rho(z) =
\Re\left(\sum_{p=2}^\infty a_p (1-z)^p\right)$ with $a_p\in\C$ such
that $\sum_{p=2}^\infty a_p = 1$ and $\sum_{p=2}^\infty |a_p|2^p <\infty$.
Note that $\rho(1)=0$, $\rho'(1)=0$ and $\frac{1}{2\pi}\int_\S \rho =1$.
This shows that the regularized integral on the l.h.s.\ of
\eqref{LevMartin}
corresponds to the integral of the trace of a regularized time
delay, the latter being the time delay multiplied by a not
necessarily positive density
function $\frac{1}{2\pi}\rho$ which vanishes (to second order)
at energy values in channels for which there is no scattering.
\end{remark}

\section{Point interaction}\label{secpoint}

In this short section, we briefly recall the result obtained in \cite{KR1} for the system of one point interaction in $\R^3$. For that model all calculations are explicit, that is, the exact expression for the wave operators and the scattering operator have been determined. However, the main difference with potential scattering as presented above is that for point interaction, the operator $\Gamma_3(A)$ is in general not equal to $1$.

Let us consider the operator $-\Delta$ defined on $C_c^\infty(\R^3 \setminus \{0\})$. It has deficiency indices $(1,1)$ and
all its self-adjoint extensions $H^\alpha$ can be parameterized
by an index $\alpha$ belonging to $\R\cup\{\infty\}$. This parameter determines a certain boundary condition at $0$, and $-4\pi\alpha$
also has a physical interpretation as the inverse of the scattering length. The choice $\alpha=\infty$ corresponds to the free Laplacian $-\Delta$. The operator $H^\alpha$ has a single eigenvalue for $\alpha <0$ of value $-(4\pi\alpha)^2$ but no point spectrum for $\alpha \in [0,\infty]$. Furthermore, the action of the wave
operator $W_-^\alpha$ for the couple $(H^\alpha, -\Delta)$
on any $f \in L^2(\R^3)$ is given by \cite{AGHH}~:
\begin{equation*}
\big[(W_-^\alpha-1)f\big](x)=s-\lim_{R\to \infty}(2\pi)^{-3/2}
\int_{\kappa\leq R}\kappa^2 \;\!\d \kappa \int_{\S^2} \d \omega \;\!\frac{e^{i\kappa|x|}}{(4\pi\alpha -i\kappa)|x|}
\;\!\hat{f}(\kappa\omega)\ .
\end{equation*}
It is easily observed that $W_-^\alpha-1$ acts trivially on the orthocomplement of the range of $P_{00}$. Finally, the scattering operator $S^\alpha$ for this system is given by
\begin{equation*}
S^\alpha =  1 -P_{00} + \frac{4\pi\alpha +i\sqrt{-\Delta}}{4\pi\alpha -i\sqrt{-\Delta}}P_{00}\ .
\end{equation*}

Now, it has been proved in \cite{KR1} that the wave operator can be rewritten as
\begin{equation*}
W_-^\alpha =1+ \frac{1}{2}\big(1+\tanh(\pi A)-i\cosh(\pi A)^{-1}\big)\;\!(S^\alpha-1) \;\!
P_{00}\ .
\end{equation*}
So let us set
\begin{equation*}
 r(\xi) =  -\tanh(\pi \xi) +i\cosh(\pi \xi)^{-1}\; .
\end{equation*}
and
\begin{equation*}
s^\alpha(\lambda) =  \frac{4\pi\alpha +i\sqrt{\lambda}}{4\pi\alpha -i\sqrt{\lambda}}\
\end{equation*}
As a consequence of the expression for $W_-^\alpha$, the operators $\Gamma_1(A), \Gamma_3(A)$ and $\Gamma_2(H_0),\Gamma_4(H_0)$ are all equal to $1$ on the orthocomplement of the range of $P_{00}$. So let us give in the following table the expressions of these operators restricted to $\H_{00}$. We also compute their corresponding contribution to the winding number of $q(W_-^\alpha)P_{00}$
\begin{center}
\begin{tabular}{|c|c|c|c|c|c|c|c|c|c|}
\hline
 & $\Gamma_1$ & $\Gamma_2$ & $\Gamma_3$ & $\Gamma_4$  & $w_1$
& $w_2$ & $w_3$ & $w_4$ & $w\big(q(W^\alpha_-)P_{00}\big)$ \\ \hline\hline
$\alpha < 0$ &$ 1 $&$ s^\alpha $&$ r $&$ 1 $&$ 0 $&$ -\frac{1}{2}  $&$
-\frac{1}{2}$&$0 $&$ -1 $ \\\hline
$\alpha = 0  $&$  r $&$ -1 $&$ r $&$ 1 $&$ \frac{1}{2}  $&$0$& $
-\frac{1}{2}$&$0 $&$ 0 $ \\\hline
$\alpha > 0  $&$ 1 $&$ s^\alpha $&$ r $&$ 1 $&$ 0 $&$ \frac{1}{2}  $&$
-\frac{1}{2}$&$0 $&$ 0 $ \\
\hline
$\alpha = \infty  $&$ 1 $&$ 1 $&$ 1 $&$ 1 $&$ 0 $&$ 0  $&$0$&$0 $&$ 0 $ \\
\hline
\end{tabular}
\end{center}
and we see that the total winding number of $w\big(q(W^\alpha_-)P_{00}\big)$ is equal to minus the number of bound states of $H^\alpha$. This corresponds exactly to the topological Levinson's theorem \eqref{eq-lev2}.

\section{Appendix}

Let $\H$ be an abstract Hilbert space and $\Gamma$ be a map $\S \to \U(\H)$ such that $\Gamma(t)-1 \in  \K(\H)$ for all $t \in \S$. Let $p \in \N$ and let $S_p$ denote the Schatten ideal in $\K(\H)$.

\begin{lemma}
Assume that the map $\S \ni t\mapsto \Gamma(t)-1 \in S_p$ is continuous in norm of $S_p$ and is continuously differentiable in norm of $\K(\H)$. Then the map $\S \ni \mapsto \det_{p+1}\big(\Gamma(t)\big)\in \C$ is continuously differentiable and the following equality holds:
\begin{equation}\label{aobtenir}
\frac{\d \ln\det_{p+1}\big(\Gamma(t)\big)}{\d t} = \tr\big[\big(1-\Gamma(t)\big)^{p}\Gamma(t)^*
\Gamma'(t)\big].
\end{equation}
Furthermore, if the map $\S \ni t\mapsto \Gamma(t)-1 \in S_p$ is continuously differentiable in norm of $S_p$, then the statement already holds for $p$ instead of $p+1$.
\end{lemma}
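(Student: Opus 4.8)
The plan is to establish the logarithmic derivative formula first in finite dimensions, where it reduces to a purely algebraic identity, and then to pass to the general case by a finite-rank truncation arranged so that \emph{only} the full, trace-class, combination of operators ever appears. Throughout I write $T(t):=\Gamma(t)-1$ and, for an integer $q\in\{p,p+1\}$, I use that $\det_q(1+T)$ is non-zero precisely when $1+T$ is invertible; since $\Gamma(t)$ is unitary this holds for every $t$, so $t\mapsto\det_q(\Gamma(t))$ is a continuous non-vanishing map and admits a continuous logarithm on any interval.

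First I would treat the case where $T$ has finite rank. Then all powers $T^k$ are trace class, one has $\ln\det_q(1+T)=\tr[\ln(1+T)]+\sum_{k=1}^{q-1}\frac{(-1)^k}{k}\tr[T^k]$, and differentiating by Jacobi's formula together with cyclicity of the trace gives
\[
\frac{\d}{\d t}\ln\det_q(1+T) = \tr\Big[\Big((1+T)^{-1}-\sum_{j=0}^{q-2}(-1)^j T^j\Big)T'\Big].
\]
The bracketed operator is exactly the remainder of the Neumann series, and the elementary identity $(1+T)^{-1}-\sum_{j=0}^{q-2}(-1)^j T^j=(-1)^{q-1}(1+T)^{-1}T^{q-1}$ (valid whenever $1+T$ is invertible) turns this into $(-1)^{q-1}\tr[T^{q-1}(1+T)^{-1}T']$. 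Substituting $T=-(1-\Gamma)$ and $(1+T)^{-1}=\Gamma^*$ yields $\tr[(1-\Gamma)^{q-1}\Gamma^*\Gamma']$, which is \eqref{aobtenir} for $q=p+1$. Integrating over $[t_0,t_1]$ records the formula in integrated form for finite-rank $T$, avoiding any issue with the branch of the logarithm.

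Next I would fix an increasing sequence of finite-rank orthogonal projections $P_n\nearrow 1$ and set $T_n(t):=P_nT(t)P_n$, so that $T_n'(t)=P_nT'(t)P_n$. On a compact interval the set $\{T(t)\}$ is compact in $S_p$, whence $T_n\to T$ and (in the respective norm of the two hypotheses) $T_n'\to T'$ uniformly in $t$, with uniform Schatten bounds; for $n$ large $1+T_n(t)$ is uniformly invertible, so the finite-rank integrated formula applies. It then remains to pass to the limit: the endpoints converge by continuity of $\det_q$ in the $S_q$-norm, and the integrand converges uniformly once $t\mapsto T(t)^{q-1}(1+T(t))^{-1}T'(t)$ is shown continuous in $S_1$. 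Here the two hypotheses enter through Hölder's inequality for Schatten norms. For $q=p+1$ one has $T^{p}\in S_1$, so it suffices that $(1+T)^{-1}$ and $T'$ be bounded and norm-continuous, which is exactly what continuous differentiability in the norm of $\K(\H)$ provides. For $q=p$ one only has $T^{p-1}\in S_{p/(p-1)}$, and closing Hölder now forces $T'\in S_p$; this is precisely why the sharper conclusion requires continuous differentiability in the norm of $S_p$.

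Finally, writing $g(s)$ for the resulting continuous scalar integrand, uniform convergence gives $\int_{t_0}^{t}g=\lim_n\big(\ln\det_q(1+T_n(t))-\ln\det_q(1+T_n(t_0))\big)$, while $\det_q(1+T_n(t))\to\det_q(\Gamma(t))\neq0$; hence $\phi(t):=\phi(t_0)+\int_{t_0}^t g$ satisfies $e^{\phi(t)}=\det_q(\Gamma(t))$, so it is a continuously differentiable logarithm with $\phi'=g$, and $t\mapsto\det_q(\Gamma(t))=e^{\phi(t)}$ is accordingly continuously differentiable. I expect the genuine difficulty to lie not in the algebra but in this limiting step: one must verify the uniform-in-$t$ Schatten-norm convergences and, above all, never split the trace into individually ill-defined pieces, since only the full product $T^{q-1}(1+T)^{-1}T'$ is trace class. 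The whole argument is organized so that this combination is the sole object whose trace is ever taken.
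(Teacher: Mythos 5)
Your argument is correct, but it follows a genuinely different route from the paper's own proof. The paper works directly in infinite dimensions: it writes $\det_{p+1}\big(\Gamma(t)\big)=\det\big(1+R_{p+1}(t)\big)$ with $R_{p+1}(t)=\Gamma(t)\exp\big\{\sum_{j=1}^p\frac{1}{j}A(t)^j\big\}-1$ and $A(t)=1-\Gamma(t)$, reduces the difference quotient of $\det_{p+1}\big(\Gamma(s)\big)/\det_{p+1}\big(\Gamma(t)\big)=\det\big(1+B_{p+1}(t,s)\big)$ to the difference quotient of $H_{p+1}(t)=\big(1-A(t)\big)\exp\big\{\sum_{j=1}^p\frac{1}{j}A(t)^j\big\}$, and the decisive input is the factorization $H_{p+1}(t)=A(t)^{p+1}h\big(A(t)\big)$ with $h$ entire (citing \cite[Lem.~6.1]{S1}), which converts the hypotheses ($S_p$-continuity plus norm-$C^1$) into $C^1$-regularity of $t\mapsto H_{p+1}(t)$ in $S_1$, so the pointwise limit exists and equals the right-hand side of \eqref{aobtenir}. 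You instead prove the identity first for finite rank via Jacobi's formula, where the Neumann-remainder identity $(1+T)^{-1}-\sum_{j=0}^{q-2}(-1)^jT^j=(-1)^{q-1}(1+T)^{-1}T^{q-1}$ does the algebra, and then pass to the limit in integrated, exponentiated form along truncations $T_n=P_nTP_n$, with H\"older's inequality in Schatten norms guaranteeing that the single combination $T^{q-1}(1+T)^{-1}T'$ is trace class and converges uniformly. Your Neumann identity plays exactly the structural role that the entire-function factorization plays in the paper: each isolates the one trace-class object whose trace is taken. What your route buys: it is more elementary (matrix-level Jacobi formula, no infinite-determinant calculus beyond continuity of $\det_q$ on $S_q$, no entire-function lemma), and the H\"older exponent bookkeeping makes completely transparent why the sharper conclusion for $q=p$ forces $\Gamma'\in S_p$, which the paper only extracts at the end from the factorization \eqref{bellavista}. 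What the paper's route buys: it is pointwise and shorter, with no truncation-compactness argument, no integration detour, and no continuous branch of the logarithm to manage. One presentational caution on your finite-rank step: writing $\ln\det_q(1+T)=\tr\big[\ln(1+T)\big]+\sum_{k=1}^{q-1}\frac{(-1)^k}{k}\tr\big[T^k\big]$ tacitly requires a continuous operator logarithm along the path; it is cleaner to apply Jacobi's formula directly to $\det\big[(1+T)\exp\big(\sum_{k=1}^{q-1}\frac{(-1)^k}{k}T^k\big)\big]$, which yields the same derivative with no logarithm at all, after which your integrated formulation goes through verbatim.
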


\begin{proof}
For simplicity, let us set $A(t):=1-\Gamma(t)$ for any $t \in \S$ and recall from \cite[Eq.~XI.2.11]{GGK} that
$\det_{p+1}\big(\Gamma(t)\big) = \det\big(1+R_{p+1}(t)\big)$
with
\begin{equation*}
R_{p+1}(t):= \Gamma(t)\exp\Big\{\sum_{j=1}^p\frac{1}{j}A(t)^j\Big\} -1\ .
\end{equation*}
Then, for any $t,s \in \S$ with $s\neq t$ one has
\begin{eqnarray*}
\frac{\det_{p+1}\big(\Gamma(s)\big)}{\det_{p+1}
\big(\Gamma(t)\big)}
&=&\frac{\det\big(1+R_{p+1}(s)\big)}{\det
\big(1+R_{p+1}(t)\big)} \\
&=& \frac{\det\big[\big(1+R_{p+1}(t)\big)\big(1+B_{p+1}(t,s)\big)\big]}{\det
\big(1+R_{p+1}(t)\big)}\\
&=& \det\big(1+B_{p+1}(t,s)\big)
\end{eqnarray*}
with
$B_{p+1}(t,s) = \big(1+R_{p+1}(t)\big)^{-1} \big(R_{p+1}(s)-R_{p+1}(t)\big)$.
Note that $1+R_{p+1}(t)$ is invertible in $\B(\H)$ because $\det_{p+1}\big(\Gamma(t)\big)$ is non-zero.
With these information let us observe that
\begin{eqnarray}\label{presque2}
\frac{\frac{
\det_{p+1}(\Gamma(s))- \det_{p+1}(\Gamma(t))}{|s-t|}
}{\det_{p+1}(\Gamma(t))}
= \frac{1}{|s-t|}
\big[\det\big(1+B_{p+1}(t,s)\big)-1\big]\ .
\end{eqnarray}
Thus, the statement will be obtained if the limit $s\to t$ of this expression exists and if this limit is equal to the r.h.s. of \eqref{aobtenir}.

Now, by taking into account the asymptotic development of $\det(1+\e X)$ for $\e$ small enough, one obtains that
\begin{eqnarray}\label{presque}
\nonumber &&\lim_{s \to t} \frac{1}{|s-t|}
\big[\det\big(1+B_{p+1}(t,s)\big)-1\big] \\
\nonumber &=& \lim_{s \to t} \tr\bigg[\frac{B_{p+1}(t,s)}{|s-t|}\bigg] \\
&=&
\lim_{s \to t} \tr\bigg[H_{p+1}(t)^{-1}\frac{H_{p+1}(s)-H_{p+1}(t)}{|s-t|}\bigg]
\end{eqnarray}
with $H_{p+1}(t):=\big(1-A(t)\big)\exp\big\{\sum_{j=1}^p\frac{1}{j}A(t)^j\big\}$.
Furthermore, it is known that the function $h$ defined for $z \in \C$ by $h(z):=z^{-(p+1)}(1-z)\exp\big\{\sum_{j=1}^p \frac{1}{j}z^j\big\}$ is an entire function,
see for example \cite[Lem.~6.1]{S1}.
Thus, from the equality
\begin{equation}\label{bellavista}
H_{p+1}(t) = A(t)^{p+1}h\big(A(t)\big)
\end{equation}
and from the hypotheses on $A(t)\equiv 1-\Gamma(t)$ it follows that the map $\S \ni t \mapsto H_{p+1}(t)\in S_1$ is continuously differentiable in the norm of $S_1$. Thus, the limit \eqref{presque} exists, or equivalently the limit \eqref{presque2} also exists.
Then, an easy computation using the geometric series leads to the expected result, {\it i.e.}~the limit in \eqref{presque} is equal to the r.h.s.~of \eqref{aobtenir}.

Finally, for the last statement of the lemma, it is enough to observe from \eqref{bellavista} that the map $\S \ni t \mapsto H_{p}(t)\in S_1$ is continuously differentiable in the norm of $S_1$ if the map
$\S \ni t\mapsto \Gamma(t)-1 \in S_p$ is continuously differentiable in norm of $S_p$. Thus the entire proof holds already for $p$ instead of $p+1$.
\end{proof}

\begin{lemma}\label{l1}
Assume that the map $\S \ni t\mapsto \Gamma(t)-1 \in S_p$ is continuous in norm of $S_p$ and is continuously differentiable in norm of $\K(\H)$. Then for an arbitrary $q\geq p$ one has:
\begin{equation*}
\int_\S \tr\big[\big(1-\Gamma(t)\big)^{q}\Gamma(t)^*
\Gamma'(t)\big]\d t  = \int_\S \tr\big[\big(1-\Gamma(t)\big)^{p}\Gamma(t)^*
\Gamma'(t)\big]\d t\ .
\end{equation*}
\end{lemma}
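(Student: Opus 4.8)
The plan is to express each integrand as the logarithmic derivative of a regularized determinant and then to show that the two determinants differ by a single-valued function on the circle, whose logarithmic derivative therefore integrates to zero. First I would note that for $q\ge p$ one has the continuous inclusion $S_p\subseteq S_q$, so the map $\S\ni t\mapsto\Gamma(t)-1$ is a fortiori continuous in the norm of $S_q$ while remaining continuously differentiable in the norm of $\K(\H)$. Thus the hypotheses of the preceding lemma hold with $q$ in place of $p$, and applying \eqref{aobtenir} for both exponents gives
\[
\frac{\d\,\ln{\det}_{k+1}\big(\Gamma(t)\big)}{\d t}
=\tr\big[\big(1-\Gamma(t)\big)^{k}\Gamma(t)^*\Gamma'(t)\big],
\qquad k\in\{p,q\}.
\]
Consequently the difference of the two integrands in the statement equals $\frac{\d}{\d t}\ln\big[{\det}_{q+1}(\Gamma(t))/{\det}_{p+1}(\Gamma(t))\big]$, and it remains to show that this is the derivative of a genuine (single-valued, periodic) function on $\S$.

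The key step is the explicit identity
\[
\frac{{\det}_{q+1}\big(\Gamma(t)\big)}{{\det}_{p+1}\big(\Gamma(t)\big)}
=\exp\Big(\sum_{m=p+1}^{q}\frac{(-1)^m}{m}\,\tr\big[\big(\Gamma(t)-1\big)^m\big]\Big),
\]
which follows directly from the eigenvalue definition of ${\det}_p$ recalled in the introduction: writing $\{e^{i\theta_j(t)}\}_j$ for the eigenvalues of $\Gamma(t)$, the two determinants differ only in the range of the inner sum, and $\sum_j(e^{i\theta_j(t)}-1)^m=\tr[(\Gamma(t)-1)^m]$ by Lidskii's theorem. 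I would justify that each term is finite and defines a continuous $\C$-valued function: since $\Gamma(t)-1\in S_p$, the Schatten--Hölder inequality gives $(\Gamma(t)-1)^m\in S_{p/m}\subseteq S_1$ for every $m\ge p+1$, with $S_1$-norm continuous in $t$; hence $t\mapsto\tr[(\Gamma(t)-1)^m]$ is a well-defined, single-valued, continuous (indeed $C^1$) function on $\S$.

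Combining the two displays, the function $G(t):=\sum_{m=p+1}^{q}\frac{(-1)^m}{m}\tr[(\Gamma(t)-1)^m]$ is a single-valued $C^1$ map $\S\to\C$ whose derivative is exactly the difference of the two integrands. Integrating over the closed curve $\S$ then gives $\int_\S G'(t)\,\d t=0$ by periodicity, which is the claimed equality. The main obstacle I anticipate is purely in the bookkeeping of the second step: verifying the Schatten--Hölder bounds that make each $\tr[(\Gamma(t)-1)^m]$ trace class and $C^1$ in $t$, and invoking Lidskii's theorem to pass from the eigenvalue product in the definition of ${\det}_p$ to the traces $\tr[(\Gamma(t)-1)^m]$; once these are in place, the cancellation is immediate.
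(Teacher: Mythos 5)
Your argument is correct, but it takes a genuinely different route from the paper's. The paper's proof never touches determinants: it exploits unitarity through the identity $\big(1-\Gamma(t)\big)\Gamma(t)^*=\Gamma(t)^*-1$ to derive the recursion $M_q(t)=M_{q-1}(t)-\tr\big[\big(1-\Gamma(t)\big)^{q-1}\Gamma'(t)\big]$ for $M_q(t):=\tr\big[\big(1-\Gamma(t)\big)^{q}\Gamma(t)^*\Gamma'(t)\big]$, and then kills each correction term by showing, via a two-norm difference-quotient estimate, that $\tr\big[\big(1-\Gamma(t)\big)^{q-1}\Gamma'(t)\big]=-\frac{1}{q}\frac{\d}{\d t}\tr\big[\big(1-\Gamma(t)\big)^{q}\big]$, an exact derivative of a single-valued function on $\S$. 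You instead apply \eqref{aobtenir} at levels $p$ and $q$ (legitimate, since $S_p\subseteq S_q$ contractively, so the hypotheses persist) and identify the difference of the integrands as the logarithmic derivative of ${\det}_{q+1}\big(\Gamma(t)\big)/{\det}_{p+1}\big(\Gamma(t)\big)=e^{G(t)}$ with $G(t)=\sum_{m=p+1}^{q}\frac{(-1)^m}{m}\tr\big[\big(\Gamma(t)-1\big)^m\big]$ single-valued. Unwinding signs, each term $\frac{(-1)^m}{m}\tr\big[(\Gamma(t)-1)^m\big]$ is precisely the primitive $\frac{1}{m}\tr\big[(1-\Gamma(t))^m\big]$ that the paper exhibits term by term, so the two proofs share the same analytic core; what your packaging buys is a stronger and more conceptual pointwise statement --- the regularized determinants at different levels differ by the exponential of a globally defined $C^1$ function on $\S$, so their winding numbers coincide --- whereas the paper's recursion is more elementary, self-contained, and independent of the determinant formalism. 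One caution: the only nontrivial analysis in either approach is the claim you dispose of in a parenthesis, namely that $t\mapsto\tr\big[(\Gamma(t)-1)^m\big]$ is $C^1$ with derivative $m\,\tr\big[(\Gamma(t)-1)^{m-1}\Gamma'(t)\big]$. Schatten--H\"older and Lidskii give only continuity; differentiability does not follow from $S_1$-continuity alone, since $\Gamma'$ exists merely in operator norm. The needed argument is exactly the paper's estimate: write $A(s)^m-A(t)^m$ as a noncommutative polynomial of degree $m-1$ applied to the increment $A(s)-A(t)$, use cyclicity and H\"older together with the $S_p$-continuity to control the polynomial in trace norm, and use operator-norm differentiability for the increment. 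You correctly flag this as the main bookkeeping obstacle, so there is no gap in the plan, but a complete write-up must include that estimate rather than the parenthetical assertion.
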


\begin{proof}
One first observe that for $q>p$ one has
\begin{eqnarray*}
M_{q}(t)&:=&\tr\big[\big(1-\Gamma(t)\big)^{q}\Gamma(t)^*
\Gamma'(t)\big] \\
&=& \tr\Big[\big(1-\Gamma(t)\big)^{q-1}\Gamma(t)^*
\Gamma'(t)- \Gamma(t)\big(1-\Gamma(t)\big)^{q-1}\Gamma(t)^*
\Gamma'(t)\Big] \\
&=& M_{q-1}(t) -\tr\big[\big(1-\Gamma(t)\big)^{q-1}
\Gamma'(t)\big]
\end{eqnarray*}
where the unitarity of $\Gamma(t)$ has been used in the third  equality. Thus the statement will be proved by reiteration if one can show that
\begin{equation}\label{aintegrer}
\int_\S \tr\big[\big(1-\Gamma(t)\big)^{q-1}
\Gamma'(t)\big] \d t
\end{equation}
is equal to $0$.

For that purpose, let us set for simplicity $A(t):=1-\Gamma(t)$ and observe that for $t,s \in \S$ with $s\neq t$ one has
\begin{eqnarray*}
\tr[A(s)^q]-\tr[A(t)^q] =
\tr\big[A(s)^q-A(t)^q\big] =
\tr\Big[P_{q-1}\big(A(s), A(t)\big)\, \big(A(s)-A(t)\big) \Big]
\end{eqnarray*}
where $P_{q-1}\big(A(s), A(t)\big)$ is a polynomial of degree $q-1$ in the two non commutative variables $A(s)$ and $A(t)$. Note that we were able to use the cyclicity because on the assumptions $q-1\geq p$ and $A(t)\in S_p$ for all $t \in \S$. Now, let us observe that
\begin{eqnarray*}
&&\bigg|\frac{1}{|s-t|}
\tr\Big[P_{q-1}\big(A(s), A(t)\big)\, \big(A(s)-A(t)\big)\Big]- \tr\Big[ P_{q-1}\big(A(t),A(t)\big)\,A'(t)\Big]\bigg|
\\
&\leq& \Big\|\frac{A(s)-A(t)}{|s-t|}\Big\|  \,\Big|
\tr\big[P_{q-1}\big(A(s), A(t)\big)-P_{q-1}\big(A(t), A(t)\big)\big]
\Big| \\
&& + \Big\|\frac{A(s)-A(t)}{|s-t|}-A'(t)\Big\|
\,\Big|
\tr\big[P_{q-1}\big(A(t), A(t)\big)\big]\Big|\ .
\end{eqnarray*}
By assumptions, both terms vanish as $s \to t$. Furthermore, one observes that $P_{q-1}\big(A(t),A(t)\big)=q A(t)^{q-1}$. Collecting these expressions one has shown that:
\begin{equation*}
\lim_{s\to t}\frac{\tr[A(s)^q]-\tr[A(t)^q]}{|s-t|}- q\,\tr[A(t)^{q-1}A'(t)] =0\ ,
\end{equation*}
or in simpler terms $\frac{1}{q}\frac{\d\, \tr[A(t)^q]}{\d t} = \tr[A(t)^{q-1}A'(t)]$. By inserting this equality into \eqref{aintegrer} one directly obtains that this integral is equal to $0$, as expected.
\end{proof}

\end{document}